\newtheorem{theorem}{Theorem}
\newtheorem{lemma}[theorem]{Lemma}
\newtheorem{corollary}[theorem]{Corollary}
\newcommand\ketbra[2]{\ket{#1}\!\bra{#2}}
\DeclareMathOperator{\Tr}{Tr}
\newcommand{\R}{\mathbb{R}}
\newcommand{\C}{\mathbb{C}}
\newcommand{\tr}{{\rm tr}}
\theoremstyle{plain}
\theoremstyle{definition}
\def\p@subsection{}\makeatother
\begin{document}
\title{Quantum Horn's lemma, finite heat baths, and the third law of thermodynamics}
\author{Jakob Scharlau}
\affiliation{Department of Theoretical Physics, University of Heidelberg, Heidelberg, Germany}
\author{Markus P. M\"uller}
\affiliation{Institute for Quantum Optics and Quantum Information, Austrian Academy of Sciences, Boltzmanngasse 3, A-1090 Vienna, Austria}
\affiliation{Department of Applied Mathematics, University of Western Ontario, London, ON N6A 5BY, Canada}
\affiliation{Department of Philosophy, University of Western Ontario, London, ON N6A 5BY, Canada}
\affiliation{Perimeter Institute for Theoretical Physics, Waterloo, ON N2L 2Y5, Canada}
\affiliation{Department of Theoretical Physics, University of Heidelberg, Heidelberg, Germany}

\date{February 2, 2018}

\begin{abstract}
Interactions of quantum systems with their environment play a crucial role in resource-theoretic approaches to thermodynamics in the microscopic regime. Here, we analyze the possible state transitions in the presence of ``small'' heat baths of bounded dimension and energy. We show that for operations on quantum systems with fully degenerate Hamiltonian (noisy operations), all possible state transitions can be realized exactly with a bath that is of the same size as the system or smaller, which proves a quantum version of Horn's lemma as conjectured by Bengtsson and \.{Z}yczkowski. On the other hand, if the system's Hamiltonian is not fully degenerate (thermal operations), we show that some possible transitions can only be performed with a heat bath that is unbounded in size and energy, which is an instance of the third law of thermodynamics. In both cases, we prove that quantum operations yield an advantage over classical ones for any given finite heat bath, by allowing a larger and more physically realistic set of state transitions.
\end{abstract}

\maketitle

\bigskip

\section{Introduction}
Thermodynamics is traditionally viewed as a statistical theory of a large number of particles. In the last few years, however, there has been a wave of activity in extending the realm of thermodynamics to microscopic quantum systems, motivated by ideas from quantum information theory~\cite{Janzing,HHO,Dahlsten,Lidia,HO2013,Resource2013,Aberg,SSP2014,Renes2014,Faist2015,Brandao2015,Nicole2015,Egloff,MMP2015,Gemmer,MOThirdLaw,Guryanova,LJR2015,LKJR2015,KLOJ2016,Cwiklinski}. One key idea is to formulate thermodynamics as a \emph{resource theory}~\cite{Resource2013,Horo4,Coecke,BrandaoGour,LKR2015}: by specifying and restricting to a set of experimental operations that an agent is allowed to implement ``for free'', one obtains a rigorous framework which allows to address typical thermodynamics questions even in physical scenarios that are far from the usual thermodynamic limit.

\begin{figure}[!hbt]
\begin{center}
\includegraphics[angle=0, width=.3\textwidth]{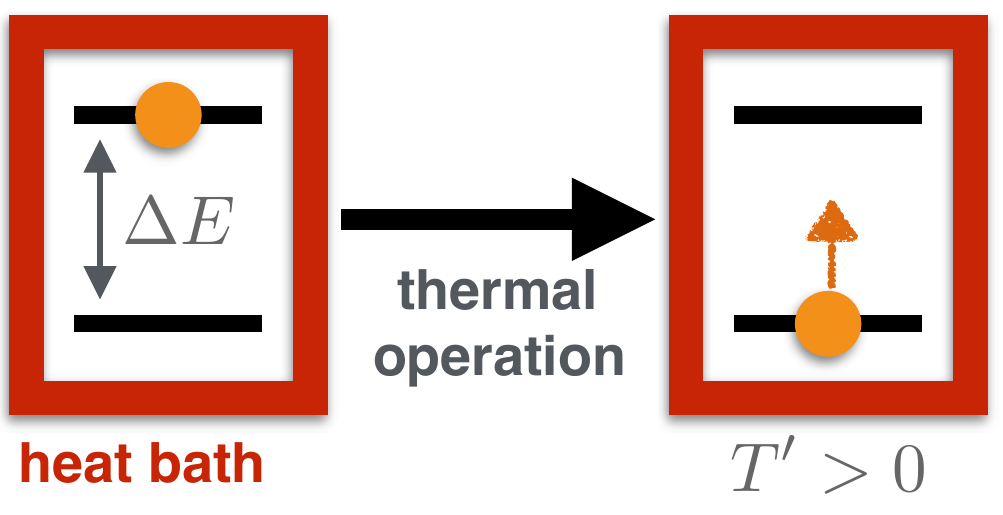}
\caption{Theorem~\ref{TheThird} can be seen as a simple instance of the third law of thermodynamics. We start with a qubit in its excited state, which serves on the one hand as a work reservoir, and on the other hand represents the system that we would like to cool. For any finite heat bath, thermal operations can never achieve the ground state exactly; there will always remain finite fluctuations, indicated by the arrow. The final temperature $T'$ will always be strictly positive, with a lower bound that depends inversely-extensively on the size and energy of the heat bath. The physical intuition of this will be elaborated in more detail in the conclusions.}
\label{fig_cooling}
\end{center}
\end{figure}

One way to formulate a suitable resource theory is by declaring the set of \emph{thermal operations} as allowed. A thermal operation consists of an arbitrary global energy-preserving unitary transformation on the given quantum system $A$ and an arbitrary heat bath $B$ (at fixed background temperature), and the subsequent disposal of the heat bath. By making $A$ itself consist of several parts (say, a single particle and a work-storage device), we can study the interplay of heat and work in microscopic systems. For example, we can ask how much work can be extracted with high probability from a given quantum state by using only thermal operations. Many of these questions have been answered explicitly, and have been shown to refine -- and converge to -- the usual thermodynamic results in the limit of a large number of weakly interacting particles.

Even though this approach has been tremendously successful in providing insights into thermodynamics in the microscopic regime, it has also raised questions about the way that thermal operations should be modelled and understood~\cite{GibbsPres,Perry}. While the quantum system $A$ is allowed to be microscopic (say, a single particle), the heat bath $B$ is often assumed to be macroscopic, or even explicitly assumed to be unbounded. However, in many physical situations, this is not a realistic assumption. In particular, we may think of a situation where a thermodynamic protocol has to be completed in a finite amount of time, allowing $A$ to effectively interact only with a restricted number of degrees of freedom in the bath. The third law of thermodynamics~\cite{MOThirdLaw} can be seen as an example of this, as we will argue further below. This leads us to study thermal operations with bounded heat baths $B$ in this paper.

The thermodynamical questions notwithstanding, there is also a purely mathematical motivation to study finite baths, which comes from more traditional quantum information problems.
For example, consider the special case that all Hamiltonians (of system and bath) are fully degenerate, i.e.\ proportional to the identity. Then thermal operations are maps of the form
\begin{equation}
   \rho'_A = \Tr_B\left[ U\left(\rho_A\otimes \frac{\mathbf{1}_B}m\right)U^\dagger\right],
   \label{eqNoisy}
\end{equation}
with $m=\dim\, B$, and $U$ an arbitrary unitary on $AB$. These maps have been called \emph{noisy operations}~\cite{HHO,Gour} in quantum information theory. They have been studied first without explicit reference to thermodynamics, often under different names (like ``$K$-unistochastic maps''~\cite{ZB2004,Bengtsson,Musz} or ``exactly factorizable maps''~\cite{Shor,Haagerup}), formalizing the idea of a map that dilutes quantum systems with random noise.

One important question in this context is whether for two given quantum states $\rho_A$ and $\rho'_A$, there exists a noisy operation that maps $\rho_A$ to $\rho'_A$. This question has been partially answered in~\cite{HHO}: if the spectrum (vector of eigenvalues) of $\rho_A$ majorizes the spectrum of $\rho'_A$, denoted
\begin{equation}
   \rho_A\succ \rho'_A,
   \label{eqMajor}
\end{equation}
then one can obtain $\rho'_A$ to arbitrary accuracy by noisy operations from $\rho_A$. Since unitary basis changes are allowed operations, we may assume that both $\rho_A$ and $\rho'_A$ are diagonal matrices. Then~(\ref{eqMajor}) can be proven by reduction to the case of discrete classical probability distributions, where the unitary $U$ acts as a suitable permutation of basis vectors.

This strategy of ``proof by reduction to classical'' has one important drawback, however: it only works if the heat bath is assumed to be unbounded. To see this, suppose that $\rho_A$ is a pure state, and $\rho'_A$ a mixed state with irrational eigenvalues. In this case, eq.~(\ref{eqMajor}) holds. But if $U$ is a classical permutation, then the right-hand side of~(\ref{eqNoisy}) will always be a density matrix with entries that are all rational numbers, and never be equal to $\rho'_A$ which has irrational diagonal elements. In other words, to approximate $\rho'_A$ to arbitrary accuracy, the dimension $m$ of the bath has to become arbitrarily large.

It has been conjectured~\cite{Bengtsson} that one can in fact perform all transitions $\rho_A\to\rho'_A$ exactly with a finite bath of dimension $\dim B=\dim A$, by using more general unitaries $U$ which are not permutations. In this paper, we give a simple proof of this conjecture (see Lemma~\ref{LemNoisyN} below), which can be seen as a quantum version of Horn's lemma. This shows that a small bath is always sufficient to implement any noisy state transition $\rho_A\to\rho'_A$ exactly.

A ``proof by reduction to classical'' has also been the standard approach in the thermodynamic setting, where the Hamiltonians of system and bath are allowed to be non-trivial: the possibility of a transition $\rho_A\to\rho'_A$ by thermal operations has standardly been proven by resorting to the classical case, where $U$ is an (energy-preserving) permutation. This leads to a setup where one is forced to allow a heat bath $B$ which is unbounded in dimension and energy, in order to approximate all possible state transitions. As mentioned above, this assumption does not always seem to be physically realistic, in particular since it presupposes the fine control to implement arbitrary energy-preserving unitaries on the joint quantum system $AB$, where $B$ is very large.

One way that this problem has recently been addressed is by considering a small subset of ``coarse'' operations that are still sufficient to realize all possible thermal state transitions~\cite{Perry}. Similarly, Ref.~\cite{Ng} has studied the role of catalysts (auxiliary systems that ease a state transformation) that are bounded in size or energy. In this paper, we take a different route. We hold on to thermal operations, and ask whether there is an analogue of quantum Horn's lemma for them, in particular in the quasiclassical case where both initial and final state are blockdiagonal in energy.
If this was the case, then every thermal state transition between quasiclassical states could be realized exactly with a finite heat bath.

Here we give two results that answer this question in a complementary way. First, we show that ``quantum'' thermal operations (that is, ones with arbitrary energy-preserving unitaries $U$) allow indeed many transitions exactly which could only be realized approximately  by ``classical'' thermal operations (where $U$ is an energy-preserving permutation). For quantum systems with non-degenerate Hamiltonian $H_A$, Figure~\ref{fig_tqc} illustrates this: for a fixed initial state and fixed finite heat bath, the possible final states obtained by classical thermal operations form a discrete finite set. In contrast, the final states obtainable by quantum thermal operations correspond to the convex hull of this set (up to some minor amendments in the case of degeneracies). This removes the unphysical discreteness originating from the finite set of classical permutations, and paints a physically more realistic and also mathematically more convenient picture. It is formalized in our first main result, Theorem~\ref{TheQC}, which generalizes quantum Horn's lemma to thermal operations.

Figure~\ref{fig_interior} shows a simple consequence of this fact, which is the content of Corollary~\ref{CorInterior}: the set of all states that can be obtained to arbitrary accuracy from an initial state by thermal operations (that is, the states ``thermomajorized'' by the given initial state) is a compact convex set. All states in its relative interior can be obtained exactly by quantum thermal operation with a suitable finite heat bath.

However, we show that this is not in general true for the states in the relative boundary, which represents our second main result. In particular, for every non-degenerate two-level system $A$, we show explicitly that there is a state that can be obtained to arbitrary accuracy, but never exactly with any finite heat bath.

This result has an interesting physical interpretation, illustrated in Figure~\ref{fig_cooling}. Consider the special case of a two-level system with some energy gap $\Delta E$, which is initially in its (pure) excited state. Suppose we would like to couple this qubit system to a heat bath, in such a way that it is in its ground state after the interaction. Physically, this is arguably the simplest possible instance of \emph{cooling}: we draw energy from the qubit system itself (coming from the excited state) in order to cool it down to zero temperature, which is less then the ambient (heat bath) temperature. However, we show in Theorem~\ref{TheThird} that the ground state can never be obtained exactly with any finite heat bath: it can only be approximated, such that the final temperature $T'$ can never be exactly zero. We give a strictly positive lower bound on $T'$ which depends on the energy and dimension of the heat bath, which can be seen as an instance of the third law of thermodynamics.
While the third law has been examined in more detail elsewhere~\cite{MOThirdLaw}, our result suggests an intuitive picture in terms of the geometry of the state space: states of zero temperature lie on the relative boundary of the set of thermomajorized states (boundary of the light-gray region in Figure~\ref{fig_interior}), and not all of these states can be obtained with a finite heat bath.

Note that our results do not prove that zero temperature can never be attained with any finite heat bath. In the geometric picture of Figure~\ref{fig_interior}, there are states on the boundary which can be obtained exactly (for example, the initial state itself). Thus, we here only consider a very special unattainability result. On the other hand, our formulation suggests a generalization of the third law beyond cooling to zero temperature. For example, in the scenario of Figure~\ref{fig_cooling}, we can also look at initial states which are only partially excited, and obtain a potential target temperature $T'_0>0$ that can only be achieved approximately, but never exactly, by interactions with any finite heat bath.

Our paper is organized as follows. In Section~\ref{SecFramework}, we provide the mathematical framework and notation. We prove our main results for noisy operations in Section~\ref{SecNoisy}, and for thermal operations in Section~\ref{SecThermal}. In the thermal case, we first consider the case of a two-level system in Subsection~\ref{SubsecQubit}, and then the general case in Subsection~\ref{SubsecGeneral}, before discussing the physical intuition and comparison to other approaches in more detail in the Conclusions, Section~\ref{SecConclusions}.

\section{Mathematical preliminaries}
\label{SecFramework}
We will assume that all quantum systems (labelled by uppercase letters like $A$) are finite-dimensional, and carry a Hamiltonian, i.e.\ a self-adjoint operator $H_A=H_A^\dagger$. In the context of noisy operations (in particular Section~\ref{SecNoisy}), we assume that all Hamiltonians are zero. We also assume that every quantum system comes with a choice of orthonormal basis, in such a way that its Hamiltonian is diagonal in that basis.

A \emph{classical state} is a probability vector $p=(p_1\ldots,p_n)\in\R^n$, with $p_i\geq 0$, $\sum_i p_i=1$. We will treat probability vectors as column vectors in matrix calculations, but for notational simplicity, we denote them by row vectors. We identify classical states $p\in\R^n$ with the corresponding quantum states (diagonal density matrices) $\hat p$ on $\C^n$, where
\[
   p=(p_1,\ldots,p_n)\quad\Rightarrow\quad \hat p:=\left(\begin{array}{ccc}	
           p_1 & \hdots & 0 \\
           \vdots & \ddots & \vdots \\
           0 & \hdots & p_n
       \end{array}\right).
\]
Given two classical states $p,q\in\R^n$, we say that \emph{$p$ majorizes $q$}~\cite{MarshallOlkin}, and write $p\succ q$, if and only if
\[
   \sum_{i=1}^k p_i^\downarrow \geq \sum_{i=1}^k q_i^\downarrow\qquad \mbox{for all }k=1,\ldots,n,
\]
where $p^\downarrow = (p_1^\downarrow,\ldots,p_n^\downarrow)$ denotes the entries of $p$ in non-increasing order, i.e\ $p_1^\downarrow\geq p_2^\downarrow\geq \ldots\geq p_n^\downarrow$. Furthermore, we write $\rho\succ\sigma$ for quantum states $\rho,\sigma$ if and only if ${\rm spec}(\rho)\succ{\rm spec}(\sigma)$, where ${\rm spec}(\rho)$ denotes the vector of eigenvalues of $\rho$. According to the Hardy-Littlewood-Pol\'ya theorem~\cite{MarshallOlkin}, $p\succ q$ if and only if there exists a bistochastic matrix $D\in\R^{n\times n}$, i.e.\ a matrix with $D_{ij}\geq 0$, $\sum_i D_{ij}=\sum_j D_{ij}=1$, such that $q=Dp$.
The following lemma is easily checked by direct calculation, but will be very useful in the following. It has also been observed in~\cite{RuchMead}.
\begin{lemma}[Ruch and Mead~\cite{RuchMead}]
\label{LemUnistoch}
	Let $p\in\R^n$ be a classical state, and let $U\in{\rm U}(n)$ be a unitary. Then the vector of diagonal elements of $U\hat p U^\dagger$ can be written
	\[
	\left( U\hat p U^\dagger)_{11},\ldots,(U\hat p U^\dagger)_{nn}\right)= Dp,
	\]
	where $D$ is a bistochastic matrix with entries $D_{ij}=|U_{ij}|^2$, i.e.\ $D=U\circ U^*$ for the Hadamard~\cite{Bhatia} product $\circ$, i.e.\ $(X\circ Y)_{ij}=X_{ij}Y_{ij}$.
\end{lemma}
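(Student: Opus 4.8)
The plan is to verify the claim by a direct computation of the diagonal matrix elements of $U\hat p U^\dagger$, exactly as advertised. First I would expand $\hat p=\sum_{k=1}^n p_k\ketbra{k}{k}$ in the chosen orthonormal basis $\{\ket 1,\ldots,\ket n\}$ and contract with $\bra i\,\cdot\,\ket i$:
\[
\left(U\hat p U^\dagger\right)_{ii}=\bra i U\hat p U^\dagger\ket i=\sum_{k=1}^n p_k\,\bra i U\ket k\bra k U^\dagger\ket i=\sum_{k=1}^n p_k\,U_{ik}\overline{U_{ik}}=\sum_{k=1}^n |U_{ik}|^2\,p_k .
\]
Reading off the coefficient $D_{ik}:=|U_{ik}|^2$ immediately gives $\big((U\hat p U^\dagger)_{11},\ldots,(U\hat p U^\dagger)_{nn}\big)=Dp$, and since $(U\circ U^*)_{ik}=U_{ik}\overline{U_{ik}}=|U_{ik}|^2$, this $D$ is indeed the Hadamard product $U\circ U^*$.

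The second and only remaining step is to check that $D$ is bistochastic. Non-negativity $D_{ij}\geq 0$ is immediate. For the row sums, $\sum_{k} D_{ik}=\sum_k |U_{ik}|^2=(UU^\dagger)_{ii}=1$, because the rows of a unitary matrix have unit Euclidean norm; symmetrically, $\sum_i D_{ij}=\sum_i |U_{ij}|^2=(U^\dagger U)_{jj}=1$ because the columns do. Hence $D_{ij}\geq 0$ and $\sum_i D_{ij}=\sum_j D_{ij}=1$, which is exactly the definition of a bistochastic matrix used above, so the lemma follows.

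I do not expect a genuine obstacle: the statement is essentially a one-line expansion together with the observation that rows and columns of a unitary are normalized. The only place that calls for a bit of care is index bookkeeping — making sure that in $D_{ij}=|U_{ij}|^2$ the index summed against $p$ is the one coming from the $U^\dagger$ factor, so that $(Dp)_i=\sum_k D_{ik}p_k$, and that ``row sum $=1$'' is matched with $UU^\dagger=\mathbf{1}$ while ``column sum $=1$'' is matched with $U^\dagger U=\mathbf{1}$. One could additionally remark that this is the ``easy'' half of the Hardy–Littlewood–Pólya/Birkhoff circle of ideas (every such $D$ is bistochastic), but the converse direction is not needed for the lemma as stated.
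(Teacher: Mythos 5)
Your computation is correct and is precisely the direct calculation that the paper itself invokes (it states the lemma is ``easily checked by direct calculation'' without writing it out): expanding $(U\hat p U^\dagger)_{ii}=\sum_k|U_{ik}|^2p_k$ and using unitarity for the row and column sums. Nothing further is needed.
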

Given any quantum system $A$, a \emph{noisy operation} is a map on the quantum states of $A$ which can be written in the form~(\ref{eqNoisy}), for some suitable choice of quantum system $B$ of dimension $\dim B=m$ and unitary transformation $U$ on $AB$. (We will not consider more general noisy or thermal operations which map states on one Hilbert space to states on another Hilbert space in this paper.) The known relation between majorization and noisy operations is as follows.
\begin{lemma}[\cite{HHO,Bengtsson}]
If $\rho\to\rho'$ by some noisy operation, then $\rho\succ\rho'$. Conversely, if $\rho\succ\rho'$, then for every $\varepsilon>0$, there exists some noisy operation $\Phi$ with $\Phi(\rho)=\rho'_\varepsilon$, where $\|\rho'_\varepsilon-\rho'\|<\varepsilon$. If both $\rho$ and $\rho'$ are diagonal in the canonical eigenbasis, then $\Phi$ can be (and is standardly) chosen in such a way that the unitary $U$ in~(\ref{eqNoisy}) is a (``classical'') permutation of basis vectors.
\end{lemma}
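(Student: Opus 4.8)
I would prove the two directions separately. The forward direction (a noisy operation implies majorization) is a short spectral computation, the converse is Birkhoff's theorem together with a rational-approximation argument that, as a bonus, also produces the ``classical permutation'' claimed in the last sentence.

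For the forward direction, I would write $\rho'=\Tr_B[\sigma_{AB}]$ with $\sigma_{AB}=U(\rho\otimes\tfrac{\mathbf 1_B}m)U^\dagger$, and note that $\sigma_{AB}$ is unitarily equivalent to $\rho\otimes\tfrac{\mathbf 1_B}m$, so its eigenvalues are those of $\rho$, each divided by $m$ and repeated $m$ times. I would then apply the Ky Fan maximum principle: for each $k$, $\sum_{j=1}^k{\rm spec}(\rho')^\downarrow_j=\max_P\Tr[P\rho']$ over rank-$k$ projections $P$ on $A$. Pushing $P$ through the partial trace, $\Tr[P\rho']=\tfrac1m\Tr[Q(\rho\otimes\mathbf 1_B)]$ with $Q:=U^\dagger(P\otimes\mathbf 1_B)U$ a projection of rank $km$; writing $\lambda_i:={\rm spec}(\rho)^\downarrow_i$, this equals $\tfrac1m\sum_{i,l}\lambda_i c_{il}$ with $c_{il}:=\langle i,l|Q|i,l\rangle\in[0,1]$ and $\sum_{i,l}c_{il}=\Tr Q=km$. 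Maximising $\sum_{i,l}\lambda_i c_{il}$ over this box-plus-budget constraint puts full weight on the $k$ largest $\lambda_i$ (each carrying multiplicity $m$), giving $\Tr[P\rho']\le\sum_{i=1}^k\lambda_i$; hence $\rho\succ\rho'$, with equality at $k=n$. (Equivalently, one may simply observe that $X\mapsto\Tr_B[U(X\otimes\tfrac{\mathbf 1_B}m)U^\dagger]$ is a \emph{unital} channel and cite the standard fact that unital channels are non-increasing in the majorization order.)

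For the converse I would first reduce to the diagonal/classical case: conjugating by a fixed unitary on $A$ is itself a noisy operation (with one-dimensional bath), and pre-/post-composing a noisy operation with such conjugations is again a noisy operation, since $(W\otimes\mathbf 1_B)U(V\otimes\mathbf 1_B)$ is unitary on $AB$; so I may assume $\rho=\hat\lambda$, $\rho'=\hat\mu$ with $\lambda\succ\mu$ probability vectors. By the Hardy--Littlewood--P\'olya theorem there is a bistochastic $D$ with $\mu=D\lambda$, and by Birkhoff's theorem $D=\sum_k t_k\Pi_k$ with permutation matrices $\Pi_k$. Given $\varepsilon>0$, I would round the weights to $\tilde t_k=n_k/m$ with $n_k\in\mathbb N$, $\sum_k n_k=m$, and $\sum_k|t_k-\tilde t_k|$ small enough that $\tilde D:=\sum_k\tilde t_k\Pi_k=\tfrac1m\sum_{a=1}^m\Pi_{\sigma_a}$ (listing each $\Pi_k$ with multiplicity $n_k$ via permutations $\sigma_1,\dots,\sigma_m$) obeys $\|\tilde D\lambda-\mu\|=\|(\tilde D-D)\lambda\|<\varepsilon$. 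Then I would take $B=\C^m$ and let $U$ be the basis permutation $U|i\rangle_A|a\rangle_B=|\sigma_a(i)\rangle_A|a\rangle_B$; a one-line computation gives $\Tr_B[U(\hat\lambda\otimes\tfrac{\mathbf 1_B}m)U^\dagger]=\tfrac1m\sum_a\widehat{\Pi_{\sigma_a}\lambda}=\widehat{\tilde D\lambda}=:\rho'_\varepsilon$ with $\|\rho'_\varepsilon-\rho'\|<\varepsilon$, and since this $U$ is literally a permutation of basis vectors the final sentence of the lemma drops out at the same time.

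The only genuine obstacle is that the construction is inherently approximate: the rational rounding forces $m=\dim B$ to blow up as $\varepsilon\to0$ whenever the target eigenvalues (hence the $t_k$) are irrational — exactly the limitation that motivates the rest of the paper. Everything else is bookkeeping: checking $0\le Q\le\mathbf 1$ and the rank count in the forward direction, and checking that ``$\tilde D\to D$'' transfers to closeness of the outputs (immediate, $\lambda$ being fixed) in the converse.
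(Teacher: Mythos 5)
Your proof is correct and follows essentially the same route the paper has in mind: the paper does not prove this lemma itself (it cites~\cite{HHO,Bengtsson}) but sketches exactly this strategy in the introduction --- reduce to diagonal states, use Hardy--Littlewood--P\'olya and Birkhoff, approximate the convex weights by rationals with common denominator $m$, and implement the resulting uniform mixture of permutations by a basis-permuting unitary on $A\otimes\C^m$ --- with the forward direction being the standard unital-channel/Ky Fan argument. Your closing remark that the bath dimension necessarily blows up as $\varepsilon\to 0$ for irrational targets is precisely the limitation the paper emphasizes as motivation for Lemma~\ref{LemNoisyN}.
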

In the case of non-trivial Hamiltonians, all these notions are generalized. We always assume that there is a fixed background inverse temperature $\beta=1/(k_B T)$, with $k_B$ the Boltzmann constant. The classical Gibbs state of a system $B$ with Hamiltonian $H_B$ is $\gamma_B=(e^{-\beta E_1},\ldots,e^{-\beta E_n})/Z$, where $E_1,\ldots,E_n$ are the eigenvalues (and thus diagonal elements) of $H_B$, and $Z=\sum_{i=1}^n e^{-\beta E_i}$ is the partition function. The corresponding quantum Gibbs state is $\hat\gamma_B=\exp(-\beta H_B)/Z$. We will sometimes write $\gamma^B$ for $\gamma_B$ if we need space for indices.

A \emph{thermal operation} on a quantum system $A$ is a map of the form
\begin{equation}
   \rho'_A = \Tr_B\left[
      U(\rho_A\otimes\hat\gamma_B)U^\dagger
   \right],
   \label{eqThermalMain}
\end{equation}
where $U$ is an energy-preserving unitary, i.e.\ $[U,H_A+H_B]=0$ (this constraint becomes void in the case that $H_A=H_B=0$, which recovers the noisy operations). Any choice of bath $B$ is allowed, but the bath must be in its thermal state $\hat\gamma_B$. An important property of noisy operations $\Phi$ is that they cannot generate coherences between different energies: that is, if $[\rho,H_A]=0$, then $[\Phi(\rho),H_A]=0$. That is, blockdiagonal states (in the energy eigenbasis) are mapped to blockdiagonal states.

Thermal operations are closely related to a generalized notion of majorization, named \emph{thermomajorization} (this also corresponds to ``majorization relative to $d$'' for $d=\gamma$ in~\cite{MarshallOlkin}). For any two classical states $p,p'$ on a system $A$ with Gibbs state $\gamma=\gamma_A$, we write
\[
   p\succ_\gamma p'
\]
and say that ``$p$ thermomajorizes $p'$'' if and only if there exists a stochastic $\gamma$-preserving matrix $D$ such that $p'=Dp$. Stochasticity of $D$ means that $D$ maps states to states, i.e.\ $D_{ij}\geq 0$ and $\sum_i D_{ij}=1$ for all $j$, and $\gamma$-preservation is the property that $D\gamma=\gamma$.

The following was shown (in different notation) by Janzing et al.~\cite{Janzing}, see also~\cite{Ruch1978,Ruch1980}.
\begin{lemma}[Janzing et al.~\cite{Janzing}]
\label{LemJanzing}
If $\hat p\to\hat p'$ by some thermal operation, where $p$ and $p'$ are classical states on the same system $A$, then $p\succ_\gamma p'$, where $\gamma=\gamma_A$ is the Gibbs state of $A$.

Conversely, if $p\succ_\gamma p'$, then for every $\varepsilon>0$, there exists some thermal operation $\Phi$ with $\Phi(\hat p)=\hat p'_\varepsilon$, where $\|p'_\varepsilon-p'\|<\varepsilon$. This map $\Phi$ can be chosen in such a way that the unitary $U$ in~(\ref{eqNoisy}) is a permutation of basis vectors; the bath Hamiltonian $H_B$ can be chosen to be composed of a sufficient number of non-interacting copies of systems with Hamiltonian $H_A$.
\end{lemma}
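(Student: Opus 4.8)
The plan is to prove the two implications by passing to the diagonals of all density matrices in the product energy eigenbasis: Lemma~\ref{LemUnistoch} handles the ``only if'' direction almost immediately, while the ``if'' direction comes from an explicit construction using a bath made of many non-interacting copies of $A$.

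\textbf{Forward direction.} Suppose $\hat p'=\Tr_B\big[U(\hat p\otimes\hat\gamma_B)U^\dagger\big]$ with $[U,H_A+H_B]=0$. Working in the product energy eigenbasis of $AB$ with composite labels $(i,k)$, the operator $\hat p\otimes\hat\gamma_B$ is diagonal with diagonal vector $p\otimes\gamma_B$, so by Lemma~\ref{LemUnistoch} the diagonal of $U(\hat p\otimes\hat\gamma_B)U^\dagger$ is a bistochastic image of $p\otimes\gamma_B$; summing over the bath index $k$ (which is precisely $\Tr_B$) then gives $p'_i=\sum_j D_{ij}p_j$ with $D_{ij}:=\sum_{k,l}|U_{(i,k),(j,l)}|^2\,\gamma^B_l$. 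This $D$ is stochastic because the columns of $U$ have unit norm and $\sum_l\gamma^B_l=1$. For $\gamma$-preservation I would use energy conservation directly on the Gibbs state: $U(\hat\gamma_A\otimes\hat\gamma_B)U^\dagger=U\,e^{-\beta(H_A+H_B)}/(Z_AZ_B)\,U^\dagger=e^{-\beta(H_A+H_B)}/(Z_AZ_B)=\hat\gamma_A\otimes\hat\gamma_B$, hence $\Tr_B$ of the left-hand side equals $\hat\gamma_A$, and reading off the diagonal gives $D\gamma_A=\gamma_A$. Therefore $p\succ_\gamma p'$. (If $H_A$ is degenerate, a thermal operation still maps the block-diagonal $\hat p$ to a block-diagonal output, so ``$p'$ classical'' is the hypothesis that this output happens to be diagonal; the computation is unchanged.)

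\textbf{Converse direction.} Given $p\succ_\gamma p'$, take the bath to be $N$ non-interacting copies of $A$, so $H_B=\sum_{r=1}^N H_{A_r}$ and $\hat\gamma_B=\hat\gamma_A^{\otimes N}$. Decompose $AB$ into sectors of fixed total energy $E$; an energy-preserving permutation may act as an arbitrary permutation of the product basis within each sector independently. The crucial observation is that, restricted to the sector of energy $E$, the input $\hat p\otimes\hat\gamma_B$ is diagonal with weights proportional (within that sector) to $p_i/\gamma_i$ on each of the $g_i(E)$ bath configurations compatible with system level $i$, i.e.\ it is flat in the bath label. A permutation of the sector can therefore redistribute these weights across system levels according to any nonnegative-integer ``transportation matrix'' $c(E)=(c_{ij}(E))$ with row sums $g_i(E)$ and column sums $g_j(E)$ --- an elementary combinatorial fact about permutations of a flat weight list. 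Taking the partial trace and collecting the sectors, the outputs achievable this way are the states $p'_N$ with $(p'_N)_i=\sum_E\lambda_E\sum_j c_{ij}(E)\,(p_j/\gamma_j)$ for explicit sector weights $\lambda_E>0$ (the expression is automatically normalized); by the forward direction each obeys $p\succ_\gamma p'_N$, and as $N\to\infty$ the degeneracies $g_i(E)$ grow so that this family becomes dense in the polytope $\{q:p\succ_\gamma q\}$. Choosing $N$ large enough yields a thermal operation with $\|p'_N-p'\|<\varepsilon$, implemented by a permutation $U$ with $B$ a collection of copies of $H_A$, as claimed. (An alternative route is to first write $p\succ_\gamma p'$, up to arbitrarily small error, as a finite composition of elementary $\gamma$-preserving maps acting on two levels at a time, realize each by a small copy-bath, and compose these, using that composing thermal operations with baths $B_1,B_2$ yields a thermal operation with bath $B_1\otimes B_2$, with permutations preserved.)

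\textbf{The main obstacle} is the density claim at the end of the converse: that energy-preserving permutations on $A$ together with a finite bath of copies of $H_A$ approximate every $\gamma$-preserving stochastic matrix. One must control, quantitatively in $N$, how the weighted lattice sums $\sum_E\lambda_E\sum_j c_{ij}(E)(p_j/\gamma_j)$ over admissible transportation matrices fill out the polytope; the atypical total-energy sectors that give access to target states near the boundary carry exponentially small weight, so resolving them finely forces $N$ to be large, and the generically incommensurable Gibbs weights obstruct exactness at finite $N$ --- which is consistent with, and anticipates, the later non-approximability results of the paper. A routine but unavoidable layer of bookkeeping is the treatment of energy degeneracies of $H_A$, where cost-free classical permutations within a degenerate block are additionally available.
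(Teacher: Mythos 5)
Your forward direction is complete and correct: reading off the induced action on diagonals via Lemma~\ref{LemUnistoch} gives a stochastic matrix $D$ with $D_{ij}=\sum_{k,l}|U_{(i,k),(j,l)}|^2\gamma^B_l$, and invariance of $\hat\gamma_A\otimes\hat\gamma_B$ under any $U$ commuting with $H_A+H_B$ yields $D\gamma=\gamma$, hence $p\succ_\gamma p'$. This is the standard argument; note that the paper itself does not prove this lemma but attributes it to Janzing et al., so there is no internal proof to compare routes with.

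The converse, however, is not actually proven. You set up the right skeleton (total-energy sectors of $A\otimes A^{\otimes N}$, flatness of $\hat p\otimes\hat\gamma_B$ within each sector up to the factors $p_i/\gamma_i$, realization of arbitrary nonnegative-integer transportation matrices $c_{ij}(E)$ by sector permutations), but the substantive content of the lemma is precisely the step you label ``the main obstacle'': that the achievable outputs $\sum_E\lambda_E\sum_j c_{ij}(E)\,p_j/\gamma_j$ become dense in $\{q:\ p\succ_\gamma q\}$ as $N\to\infty$. This needs an actual estimate, not a gesture: in the standard construction one fixes a $\gamma$-stochastic $D$ with $Dp=p'$ and shows that the \emph{typical} sectors, which carry all but a vanishing fraction of the Gibbs weight, have degeneracy ratios $g_i(E)/g_j(E)$ approximating the Boltzmann factors $e^{-\beta(E_i^A-E_j^A)}$ (a law-of-large-numbers statement for the degeneracies of $N$ non-interacting copies of $H_A$), so that within each such sector a transportation matrix reproducing the columns of $D$ up to $O(1/\sqrt{N})$-type error exists, while atypical sectors contribute negligibly. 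None of this bookkeeping is carried out, so as written the converse is an assertion rather than a proof. Your parenthetical fallback does not close the gap either: decomposing a general $\gamma$-stochastic transition into two-level $\gamma$-preserving maps (even approximately) is known to fail for $d\ge 3$ --- sequences of such elementary two-level thermal processes are strictly weaker than thermomajorization --- and even where such a decomposition exists, each extremal two-level step (a $\beta$-swap) itself requires the same approximation-by-finite-bath argument, as the paper's Lemma~\ref{LemHarmonicOsci} makes explicit.
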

In the following, we are only interested in transitions between classical states, i.e.\ instances of~(\ref{eqThermalMain}) where input and output state are both diagonal in the canonical (energy eigen)basis. Note that this does not mean that we restrict to operations $\Phi$ that map \emph{all} diagonal input states to diagonal output states.

We are thus not considering the role of quantum coherence in this paper, which is currently an active area of research~\cite{LJR2015,Cwiklinski,KLOJ2016}. However, we are going beyond the earlier quasiclassical setups by explicitly considering unitaries $U$ in~(\ref{eqThermalMain}) which are not just classical permutations of basis vectors.

\section{Noisy operations with small auxiliary systems}
\label{SecNoisy}
As mentioned in the introduction, in most of the literature on single-shot quantum thermodynamics (e.g.~\cite{HHO,HO2013,Resource2013,Brandao2015,Gour}), noisy operations are constructed by implementing a classical permutation in a high-dimensional degenerate eigenspace of system and bath. To approximate target states to arbitrary accuracy, this needs in general arbitrarily large auxiliary systems. (We also use the word ``bath'' for these auxiliary systems for brevity, even though they do not carry a non-trivial Hamiltonian and are thus not heat baths in a strict sense.)

We will now show that this is not necessary if we consider more general unitaries. In this case, all transitions can be implemented exactly with a bath that is of the same dimension as the system.
\begin{lemma}
\label{LemNoisyN}
If $p,p'\in\R^n$ are states such that $p\succ p'$, then the transition $p\to p'$ can be accomplished by a noisy operation with an auxiliary system of size $n$. That is, there is a unitary $U$ on $AB$, $A=B=\mathbb{C}^n$, such that
\begin{equation}
   \hat p'=\Tr_B\left[U\left(\hat p\otimes\frac {\mathbf{1}} n \right)U^\dagger\right].
   \label{eqPPPrime}
\end{equation}
\end{lemma}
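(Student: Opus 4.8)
The plan is to reduce the statement to the classical Horn lemma for majorization plus one simple observation: complete dephasing of an $n$‑level system is itself a noisy operation whose bath can be taken maximally mixed and of dimension exactly $n$. \emph{Step 1.} Since $p\succ p'$, the Hardy--Littlewood--Pol\'ya theorem gives a bistochastic $D$ with $p'=Dp$, and the classical Horn lemma strengthens this: $D$ may be taken \emph{orthostochastic}, $D_{ij}=O_{ij}^2$ for some real orthogonal $O\in\mathrm{O}(n)$. By Lemma~\ref{LemUnistoch} this says precisely that the vector of diagonal entries of $O\hat p O^\dagger$ is $Dp=p'$. If one wants this step self‑contained, it follows by induction on $n$: assuming $p,p'$ sorted non‑increasingly, if $p_1>p_1'$ pick the first index $j$ with $p_j<p_1'$ and apply a rotation in the $(1,j)$‑plane sending $(p_1,p_j)\mapsto(p_1',\,p_1+p_j-p_1')$, so that the first diagonal entry becomes $p_1'$ while the block on coordinates $\{2,\dots,n\}$ stays diagonal, equal to $\diag(p_2,\dots,p_{j-1},\,p_1+p_j-p_1',\,p_{j+1},\dots,p_n)$; one checks this reduced vector still majorizes $(p_2',\dots,p_n')$, the inductive hypothesis then supplies an orthogonal transformation of those coordinates, and the product is the desired $O$ (the case $p_1=p_1'$ is handled by peeling off the first coordinate).

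\emph{Step 2.} Let $\mathcal D(\sigma)=\sum_i \ket i\!\bra i\,\sigma\,\ket i\!\bra i$ be the pinching to the diagonal on $\C^n$, and put $Z=\diag(1,\omega,\dots,\omega^{n-1})$ with $\omega=e^{2\pi i/n}$. A direct calculation gives $\mathcal D(\sigma)=\frac1n\sum_{k=0}^{n-1}Z^k\sigma Z^{-k}$, so the map $\rho\mapsto\mathcal D(O\rho O^\dagger)=\frac1n\sum_{k=0}^{n-1}(Z^kO)\rho(Z^kO)^\dagger$ is a uniform mixture of $n$ unitary conjugations. Hence it is a noisy operation with $B=\C^n$: take the bath maximally mixed, $\mathbf 1/n=\frac1n\sum_k\ket k\!\bra k_B$, and $U=\sum_{k=0}^{n-1}(Z^kO)\otimes\ket k\!\bra k_B$, which is unitary since it is block‑diagonal in the bath basis with unitary blocks. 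Then $\Tr_B\!\big[U(\rho\otimes\tfrac1n\mathbf 1)U^\dagger\big]=\frac1n\sum_k(Z^kO)\rho(Z^kO)^\dagger=\mathcal D(O\rho O^\dagger)$, by the computation $\ket k\!\bra k(\tfrac1n\mathbf 1)\ket l\!\bra l=\tfrac1n\delta_{kl}\ket k\!\bra k$.

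\emph{Step 3.} Apply Step 2 with $\rho=\hat p$. By Step 1 the diagonal of $O\hat p O^\dagger$ is $p'$, so $\mathcal D(O\hat p O^\dagger)=\diag\big((O\hat p O^\dagger)_{11},\dots,(O\hat p O^\dagger)_{nn}\big)=\hat p'$, and the unitary $U$ constructed above satisfies~(\ref{eqPPPrime}) with $A=B=\C^n$. This proves the lemma. (The converse direction, that a noisy operation forces $p\succ p'$, needs nothing new: it is immediate from Lemma~\ref{LemUnistoch} together with Hardy--Littlewood--Pol\'ya.)

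I expect Step 1 to be the crux. The Hardy--Littlewood--Pol\'ya theorem by itself only produces a bistochastic matrix, and a generic bistochastic matrix is \emph{not} orthostochastic; what is genuinely needed is the extra rigidity of Horn's lemma, namely that the majorization order is always witnessed by an orthostochastic — equivalently unitary — matrix. This matters here because one cannot simply chain two‑dimensional rotations: composing $\ell$ noisy operations multiplies the bath dimensions, so realising $D$ as a product of $T$‑transforms one at a time would force a bath of dimension up to $2^{n-1}$, defeating the purpose. Collapsing the whole majorization into a single unitary $O$ on $\C^n$, so that only one $n$‑dimensional dephasing bath is ever invoked, is exactly what makes the bound $\dim B=n$ attainable; Steps 2 and 3 are then routine verifications.
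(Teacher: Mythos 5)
Your proof is correct and follows essentially the same route as the paper's: first use Horn's (Schur--Horn) theorem to obtain a unitary (real orthogonal) $O$ such that the diagonal of $O\hat p O^\dagger$ equals $p'$, then implement the full dephasing map as a noisy operation with an $n$-dimensional maximally mixed bath. The only difference is cosmetic: the paper realizes the dephasing with the system-controlled cyclic shift $W=\sum_{k}\ket{k}\bra{k}\otimes\pi^{k}$, whereas you use the bath-controlled phase unitaries $U=\sum_{k}(Z^{k}O)\otimes\ket{k}\bra{k}$ (the mixed-unitary decomposition of the pinching); both constructions work for the same reason and give the same bath dimension $n$.
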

\begin{proof}
We will apply the \emph{Schur-Horn Theorem}~\cite{Horn}, which tells us that there is a unitary $V$ on $A$ such that $V\hat p V^\dagger$ has diagonal elements $p'_1,\ldots,p'_n$. Therefore, $\hat p'=\Phi(V\hat p V^\dagger)$, where $\Phi$ is the ``decoherence map'', satisfying
\[
   \langle i|\Phi(\rho)|j\rangle=\left\{
      \begin{array}{cl}
         \langle i|\rho|j\rangle & \mbox{if }i=j\\
         0 & \mbox{otherwise}.	
      \end{array}
   \right.
\]
It remains to show that $\Phi$ can be implemented as a noisy operation with a bath of size $n$. To this end, choose an arbitrary orthonormal basis $|1\rangle,\ldots,|n\rangle$ on $B=\C^n$, and let $\pi$ be the cyclic permutation with $\pi|i\rangle=|i+1\rangle$ if $i<n$ and $\pi|n\rangle=|1\rangle$. Define the unitary $W$ by
\[
   W=\sum_{k=1}^n |k\rangle\langle k|\otimes \pi^k.
\]
Then, if $\rho'=\Tr_B\left[ W\left(\rho\otimes\frac{\mathbf{1}} n \right) W^\dagger\right]$, we have $\langle i|\rho'|j\rangle=\frac 1 n \langle i|\rho|j\rangle \tr(\pi^{i-j})$, which is zero if $i\neq j$ and $\langle i|\rho|j\rangle$ otherwise. In other words, the unitary $W$ implements $\Phi$ as a noisy operation, and we can set $U:=W(V\otimes\mathbf{1}_B)$.
\end{proof}
This result confirms a conjecture by Bengtsson and \.{Z}yczkowski~\cite{Bengtsson}, who however use slightly different terminology. They call a noisy operation on a system of size $n$, mediated by a bath of size $n$, ``unistochastic'', regarded as a map between quantum states (note that this conflicts with our use of this term in Lemma~\ref{Lem3} in the appendix). For two quantum states $\rho$ and $\sigma$, write $\rho\succ\sigma$ if and only if the spectrum of $\rho$ majorizes the spectrum of $\sigma$. Then, in the terminology of~\cite{ZB2004,Bengtsson,Musz}), Lemma~\ref{LemNoisyN} implies ``Quantum Horn's Lemma'': \emph{If $\rho\succ\sigma$, then there exists a unistochastic quantum operation that maps $\rho$ to $\sigma$ exactly.} Clearly, if $p$ and $p'$ denote the probability vectors of eigenvalues of $\rho$ and $\sigma$, then there is a unitary $U$ such that~(\ref{eqPPPrime}) holds true, with a bath of size $n$. Since there are unitaries $V$ and $W$ such that $\hat p=V\rho V^\dagger$ and $\hat p'=W\sigma W^\dagger$, we can set $U':=(W^\dagger\otimes\mathbf{1})U(V\otimes\mathbf{1})$, and then $U'$ will generate a ``unistochastic quantum operation'' that maps $\rho$\ to $\sigma$.

It is interesting to see that Lemma~\ref{LemNoisyN} above is closely related to (and, in fact, has an alternative proof in terms of) a variant of the quantum marginal problem as formulated, for example, in~\cite{Daftuar}: given the spectrum of some bipartite quantum state $\rho_{AB}$, then what are the possible spectra of the local reduced state $\rho_A$?

The following lemma gives a complete answer to this question in the case $\dim A\leq \dim B$. In a nutshell, it says that the possible spectra of $\rho_A$ are those that are majorized by the partial trace of the (non-increasingly ordered) spectrum of $\rho_{AB}$.

\begin{lemma}
\label{LemMarginal}
Let $\rho$ be a quantum state on $AB$, and $\sigma$ a quantum state on $A$. If $\dim A\leq \dim B$, there exists a unitary $U$ on $AB$ with
\[
   \Tr_B\left(U\rho U^\dagger\right)=\sigma
\]
if and only if
\begin{equation}
   \Tr_B \hat\lambda(\rho)\succ\sigma,
   \label{eqMarginal}
\end{equation}
where $\lambda(\rho)$ is the vector of eigenvalues of $\rho$, ordered in non-increasing order, and $\hat\lambda(\rho)$ is the corresponding diagonal density matrix.
\end{lemma}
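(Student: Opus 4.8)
The plan is to prove the two implications separately: necessity is a one‑line Ky Fan estimate, while sufficiency (the substantive part) reduces to the diagonal case and is then handled by moving along a chain of $T$‑transforms.

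For \emph{necessity}, suppose $U$ on $AB$ satisfies $\Tr_B(U\rho U^\dagger)=\sigma$, and fix $k\le\dim A$. For any rank‑$k$ orthogonal projector $P$ on $A$ the operator $P\otimes\mathbf{1}_B$ is a rank‑$(k\dim B)$ projector on $AB$, so by Ky Fan's maximum principle
\[
   \Tr(P\sigma)=\Tr\big[(P\otimes\mathbf{1}_B)\,U\rho U^\dagger\big]\le\sum_{i=1}^{k\dim B}\lambda_i(\rho)=\sum_{i=1}^{k}\big(\Tr_B\hat\lambda(\rho)\big)_{ii}\,,
\]
since grouping the $k\dim B$ largest eigenvalues of $\rho$ into blocks of $\dim B$ yields exactly the $k$ leading (and already non‑increasingly ordered) diagonal entries of $\Tr_B\hat\lambda(\rho)$. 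Taking the supremum over $P$ turns the left side into $\sum_{i=1}^k{\rm spec}(\sigma)^\downarrow_i$; as this holds for every $k$ and both sides coincide at $k=\dim A$ by normalisation, this is precisely $\Tr_B\hat\lambda(\rho)\succ\sigma$.

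For \emph{sufficiency}, conjugating $\rho$ by a suitable unitary on $AB$ (to diagonalise it in non‑increasing order) and the output state by a suitable unitary on $A$ (to diagonalise $\sigma$), it is enough to treat $\rho=\hat\lambda(\rho)$ and $\sigma=\hat\nu$ with $\nu={\rm spec}(\sigma)^\downarrow\prec\mu:={\rm spec}\big(\Tr_B\hat\lambda(\rho)\big)^\downarrow$, noting that then $\Tr_B\hat\lambda(\rho)=\hat\mu$. I would now exploit that $\nu\prec\mu$ means $\nu$ is obtained from $\mu$ by finitely many $T$‑transforms, each acting on only two coordinates, and build $U$ as a corresponding product of unitaries on $AB$. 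It thus suffices to realise a single $T$‑transform: if a state $\tau$ on $AB$ has reduced state $\tau_A$ with spectrum $m$, and $m'$ arises from $m$ by one $T$‑transform on levels $i,i'$, then some unitary on $AB$ should take $\tau$ to a $\tau''$ with ${\rm spec}(\tau''_A)=m'$. After conjugating so that $\tau_A=\hat m$, restrict $\tau$ to $\mathrm{span}(|i\rangle,|i'\rangle)_A\otimes B\cong\mathbb C^2\otimes B$: the PSD operator induced there has $A$‑marginal $\diag(m_i,m_{i'})$, and one rotates it to $\diag(m'_i,m'_{i'})$ by invoking the case $\dim A=2$ of the lemma — permissible because the sorted pair $(m'_i,m'_{i'})^\downarrow$ is majorized by $(m_i,m_{i'})^\downarrow$, hence a fortiori by the coarse‑grained spectrum of the block. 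The base case $\dim A=2$ is settled by an explicit two‑parameter rotation in $AB$, as in the qubit computations of Subsection~\ref{SubsecQubit}. (Alternatively one may simply invoke that for $\dim A\le\dim B$ the solution of the bipartite quantum marginal problem degenerates to the single majorization relation~(\ref{eqMarginal}); see~\cite{Daftuar} and references therein.)

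The hard part is the bookkeeping in the $T$‑transform induction: applying the two‑level rotation at one step generically gives $\tau$ nonzero off‑diagonal $A$‑blocks, so a later $T$‑transform on an overlapping pair of levels can leak weight into the off‑diagonal part of the marginal, and one must verify that ${\rm spec}(\tau_A)$ still descends along the prescribed chain to $\nu$. This is exactly the phenomenon the paper is about — classical (permutation) unitaries move the marginal through a finite set, genuinely quantum unitaries through its convex hull — and the cleanest way around it is to track the full bipartite state through the chain rather than its marginal alone; either way one arrives at~(\ref{eqMarginal}) as the precise criterion.
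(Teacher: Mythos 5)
Your necessity argument via Ky Fan's maximum principle is correct (and works in any dimension); it is a genuinely different route from the paper, which instead writes $U\hat\lambda(\rho)U^\dagger=(D\lambda(\rho))^\wedge+\Omega$ via Lemma~\ref{LemUnistoch} and uses that the $B$-marginal map preserves majorization of a sorted vector. The problem is the sufficiency half, which is the substantive content of Lemma~\ref{LemMarginal}, and there your proposal has a genuine gap that you flag yourself but never close. After the first two-level step the bipartite state acquires off-diagonal elements \emph{inside} the active block ${\rm span}\{\ket i,\ket{i'}\}\otimes B$; when the next $T$-transform acts on an overlapping pair $\{i',i''\}$, the block unitary mixes these with the cross terms to levels outside the pair, so the $A$-marginal generically picks up off-diagonal entries coupling $i$ to $i'$ and $i''$. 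Its spectrum then majorizes, but need not equal, the intended vector of the $T$-transform chain, so the induction hypothesis ``${\rm spec}(\tau_A)$ equals the current chain vector'' already fails after the second step, and no argument is given that the process still terminates at $\nu$. ``Track the full bipartite state through the chain'' is a restatement of the difficulty, not a resolution: a unitary on $AB$ cannot decohere the marginal, and rediagonalizing $\tau_A$ by a local unitary preserves its spectrum but changes the chain of $T$-transforms one would have to follow, so the bookkeeping problem reappears. The paper avoids the chain entirely: it applies Schur--Horn \emph{once} on $A$ to get $\bar U$ such that $\bar U\bigl(\Tr_B\hat\lambda(\rho)\bigr)\bar U^\dagger$ has diagonal $\lambda(\sigma)$, and then lifts it in a single step to $U=\sum_{i,j}u_{ij}\ketbra ij\otimes\pi^{j-i}$ with $\pi$ a cyclic shift on $B$; the hypothesis $\dim A\le\dim B$ guarantees $\tr(\pi^{i-j})=0$ for all relevant $i\neq j$, so the off-diagonal contamination $\Omega$ has vanishing partial trace and the marginal comes out exactly $\hat\lambda(\sigma)$.

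Two further points. First, your fallback --- ``simply invoke that for $\dim A\le\dim B$ the bipartite marginal problem degenerates to the single relation~(\ref{eqMarginal}), see~\cite{Daftuar}'' --- is circular: Daftuar and Hayden proved only the necessity of~(\ref{eqMarginal}) and \emph{conjectured} sufficiency, which is precisely what Lemma~\ref{LemMarginal} asserts (and what \cite{LiPoonWang,Vergne} establish independently). Second, the base case $\dim A=2$ is not contained in the computations of Subsection~\ref{SubsecQubit}: those concern energy-preserving unitaries acting on a system-plus-Gibbs-bath product state, not arbitrary bipartite states with arbitrary baths. It would need its own argument --- e.g.\ the paper's construction specialized to $n=2$, or a connectedness/intermediate-value argument showing the largest eigenvalue of $\Tr_B(U\rho U^\dagger)$ sweeps the full interval $[\tfrac12,\sum_{i\le\dim B}\lambda_i(\rho)]$ --- so even the anchor of your induction is currently unproved.
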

This lemma was conjectured in~\cite{Daftuar} in 2005, where it was also shown that~(\ref{eqMarginal}) is \emph{necessary} for the existence of a suitable unitary $U$ which satisfies the statement of the lemma. After Lemma~\ref{LemMarginal} was proven by the first author in his Master thesis~\cite{Scharlau}, it turned out (cf.\ Acknowledgments) that the statement had just recently been proven independently in~\cite{LiPoonWang}. The lemma can also be inferred from the results in~\cite{Vergne}. We give our proof in the appendix for completeness.

Lemma~\ref{LemNoisyN} follows as a corollary from Lemma~\ref{LemMarginal} by setting $\dim A=\dim B=n$, $\rho=\rho_A\otimes \mathbf{1}/n$, $\sigma=\rho'_A$, and noting that
\[
   \lambda(\rho)=\frac 1 n(\underbrace{\lambda_1,\ldots,\lambda_1}_n,\underbrace{\lambda_2,\ldots,\lambda_2}_n,\ldots,\underbrace{\lambda_n,\ldots,\lambda_n}_n),
\]
where $\lambda_1\geq\lambda_2\geq\ldots$ are the eigenvalues of $\rho_A$. The partial trace corresponds to summing over $n$-blocks, thus $\Tr_B \hat\lambda(\rho)=\hat\lambda$, where $\lambda=(\lambda_1,\ldots,\lambda_n)$. These are the eigenvalues of $\rho_A$, which must therefore majorize the eigenvalues of $\rho'_A$.

Given that all possible transitions on a system $A$ of size $n$ can be achieved with a bath $B$ of size $n$, it is natural to ask whether one can more generally classify the set of all \emph{maps} on the classical probability distributions that can be implemented in this setting. We give a partial answer in Lemma~\ref{Lem3} in the appendix, where we show that the corresponding set of maps lies somewhere in between the unistochastic and the bistochastic maps.

The relation to the quantum marginal problem allows us to explore what happens in the case that the heat bath dimension $m=\dim B$ is smaller than the system dimension $n=\dim A$. The case $m=2, n=3$ has been considered in~\cite{Daftuar} in Sec.\ 7.3, where the authors have worked out the full set of inequalities relating the eigenvalues of $\rho_{AB}$ to the eigenvalues of possible marginals $\rho_A$. Their result implies the following:
\begin{lemma}[Case $n=3$, $m=2$]
\label{Lem23}
If $p,p'\in\R^3$ are states such that $p\succ p'$, then the transition $p\to p'$ can be accomplished by a noisy operation with an auxiliary system of size $m=2$. That is, there is a unitary $U$ on $AB$, $A=\C^3$, $B=\mathbb{C}^2$, such that
\[
   \hat p'=\Tr_B\left[U\left(\hat p\otimes\frac {\mathbf{1}} 2 \right)U^\dagger\right].
\]
\end{lemma}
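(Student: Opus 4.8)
The plan is to derive the lemma from the explicit solution of the $(\dim A,\dim B)=(3,2)$ quantum marginal problem given in~\cite{Daftuar}, Sec.~7.3, in exactly the way Lemma~\ref{LemNoisyN} was derived from Lemma~\ref{LemMarginal}. Since permutation unitaries on $A$ can be freely absorbed into $U$, and $p\succ p'$ is equivalent to $p^\downarrow\succ (p')^\downarrow$, I would first assume $p=(p_1,p_2,p_3)$ and $p'=(p_1',p_2',p_3')$ are both ordered non-increasingly, and set $\rho:=\hat p\otimes \mathbf{1}_B/2$ on $AB=\C^3\otimes\C^2$. It is enough to produce a unitary $U$ on $AB$ for which $\Tr_B\big[U\rho U^\dagger\big]$ has spectrum $p'$: if $\sigma$ denotes this output state, then $\hat p'=W\sigma W^\dagger$ for a suitable unitary $W$ on $A$, whence $\hat p'=\Tr_B\big[(W\otimes\mathbf{1}_B)U\,\rho\,((W\otimes\mathbf{1}_B)U)^\dagger\big]$, the identity claimed in the lemma (with $U$ replaced by $(W\otimes\mathbf{1}_B)U$). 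In turn, such a $U$ exists precisely when $p'$ occurs as the spectrum of the $A$-marginal of some state on $AB$ having the spectrum of $\rho$ — namely the ``doubled'' vector
\[
   \lambda \;=\; \tfrac12\big(p_1,p_1,p_2,p_2,p_3,p_3\big),
\]
already in non-increasing order — because given such a state $\sigma_{AB}$ one can write $\sigma_{AB}=U\rho U^\dagger$ (both have spectrum $\lambda$), and then $\Tr_B\sigma_{AB}$ has spectrum $p'$ as required.

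The key input is then the description of the marginal polytope from~\cite{Daftuar}: for $\dim A=3$, $\dim B=2$, a vector $\mu=(\mu_1\ge\mu_2\ge\mu_3)\ge0$ arises as the spectrum of the $A$-marginal of a state with spectrum $\lambda=(\lambda_1\ge\cdots\ge\lambda_6)$ if and only if $\mu_1+\mu_2+\mu_3=\sum_{i=1}^6\lambda_i$, together with the two inequalities $\mu_1\le\lambda_1+\lambda_2$ and $\mu_1+\mu_2\le\lambda_1+\lambda_2+\lambda_3+\lambda_4$ (the remaining inequalities in~\cite{Daftuar}, such as $\mu_3\ge\lambda_5+\lambda_6$ and $\mu_2+\mu_3\ge\lambda_3+\lambda_4+\lambda_5+\lambda_6$, follow from these two and the trace constraint). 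The whole point of the doubled spectrum is that these collapse: here $\lambda_1+\lambda_2=p_1$, $\lambda_1+\lambda_2+\lambda_3+\lambda_4=p_1+p_2$ and $\sum_i\lambda_i=p_1+p_2+p_3=1$, so the conditions on $\mu$ become simply $\mu_1\le p_1$, $\mu_1+\mu_2\le p_1+p_2$ and $\mu_1+\mu_2+\mu_3=1$ — which for non-increasingly ordered triples is exactly the statement $p\succ\mu$. Choosing $\mu=p'$, all conditions hold by the hypothesis $p\succ p'$, and the lemma follows.

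I do not anticipate a genuine obstacle here: modulo the results of~\cite{Daftuar}, the lemma is essentially a direct corollary, and the only point requiring care is to transcribe the $(3,2)$ marginal inequalities accurately and to observe that, thanks to the ``each eigenvalue appears twice'' structure of $\lambda$, the partial sums $\sum_{i=1}^{2k}\lambda_i$ reduce to $\sum_{i=1}^{k}p_i^\downarrow$, so that the marginal constraints turn into majorization. The restriction to $n=3$, $m=2$ just mirrors the fact that this is the case for which the full list of marginal inequalities has been made explicit in~\cite{Daftuar}; treating larger $n$ with $m=2$ would require the analogous inequalities for $\dim A=n$, $\dim B=2$. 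If one wished to bypass~\cite{Daftuar} altogether, one could instead re-derive the needed $(3,2)$ eigenvalue inequalities directly by a Schubert-calculus / eigenvalue-inequality analysis, but this would merely reproduce the relevant portion of that reference.
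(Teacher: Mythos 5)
Your overall strategy---reduce to the $(\dim A,\dim B)=(3,2)$ one-body marginal problem for the doubled spectrum $\lambda=\tfrac12(p_1,p_1,p_2,p_2,p_3,p_3)$ and invoke the explicit eigenvalue inequalities of~\cite{Daftuar}, Sec.~7.3---is exactly how the paper obtains Lemma~\ref{Lem23} (it is stated there as a direct consequence of that reference), and your preliminary reductions (ordering $p,p'$, absorbing a local unitary $W$ on $A$, trading ``$\hat p'$ is the marginal'' for ``$p'$ is the spectrum of the marginal'') are fine. The gap is in the key input: the two inequalities you quote, $\mu_1\le\lambda_1+\lambda_2$ and $\mu_1+\mu_2\le\lambda_1+\lambda_2+\lambda_3+\lambda_4$ (plus the trace condition), are just the Ky-Fan-type conditions, i.e.\ the statement $\Tr_B\hat\lambda(\rho)\succ\sigma$ of Lemma~\ref{LemMarginal}; they are necessary, but they are \emph{not} sufficient when $\dim A>\dim B$, and they cannot be the complete list of~\cite{Daftuar}. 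Concretely, $\lambda=(1,0,0,0,0,0)$ and $\mu=(\tfrac13,\tfrac13,\tfrac13)$ satisfy all of your conditions, yet no state on $\C^3\otimes\C^2$ with spectrum $\lambda$ (a pure state, hence of Schmidt rank at most $2$) can have a rank-$3$ marginal. There is also an internal warning sign: for the doubled spectrum the Ky-Fan conditions collapse to $p\succ\mu$ for \emph{every} system dimension $n$ with a two-dimensional bath, so if they were sufficient, your argument verbatim would show that a qubit bath suffices for all noisy transitions in any dimension, contradicting Lemma~\ref{LemSqrt} ($m\ge\sqrt n$) as soon as $n\ge 5$.

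Thus the actual content of Lemma~\ref{Lem23} is that the \emph{additional} inequalities of~\cite{Daftuar}, Sec.~7.3 (those beyond the majorization-type family, which among other things enforce the rank obstruction exhibited above) are automatically satisfied by the doubled spectrum whenever $p\succ p'$. To close the gap you need to transcribe the complete $(3,2)$ list from that reference and verify each of the extra inequalities for $\lambda=\tfrac12(p_1,p_1,p_2,p_2,p_3,p_3)$ and $\mu=(p')^\downarrow$ under the hypothesis $p\succ p'$. Those checks should be short, thanks to the ``each eigenvalue appears twice'' structure, but they are precisely the substance of the lemma and are absent from your write-up; as it stands, the step ``the marginal constraints turn into majorization'' is not established (and, taken literally, is false).
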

That is, in this case, even a bath that is smaller than the system is sufficient to implement all noisy state transitions.

We do not currently know the smallest possible value of $m$ that would allow all noisy state transitions on a system of dimension $n$. However, we have the following elementary bound:
\begin{lemma}
\label{LemSqrt}
If all noisy state transitions on a system of size $n$ can be implemented with a bath of size $m$, then $m\geq\sqrt{n}$.
\end{lemma}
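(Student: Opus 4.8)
The plan is to control the \emph{rank} of the output of a noisy operation. The intuition is that a bath of size $m$, applied to a pure input state, can only produce output states of rank at most $m\cdot\min(n,m)$, whereas the hypothesis forces pure states to be mappable to states of full rank $n$; comparing these gives the bound.

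First I would pick a single well-chosen transition: let $p=(1,0,\ldots,0)$ and $p'=(1/n,\ldots,1/n)$. Since a pure state majorizes every probability vector, $p\succ p'$, so by assumption there is a unitary $U$ on $AB$ with $A=\mathbb{C}^n$, $B=\mathbb{C}^m$ such that
\[
   \frac{\mathbf{1}_A}{n}=\hat p'=\Tr_B\!\left[U\Big(\hat p\otimes\frac{\mathbf{1}_B}{m}\Big)U^\dagger\right].
\]
The next step is the observation that $\sigma:=U\big(\hat p\otimes\mathbf{1}_B/m\big)U^\dagger$ has rank exactly $m$: indeed $\hat p\otimes\mathbf{1}_B/m$ is $1/m$ times the orthogonal projector onto the $m$-dimensional subspace $\ket{1}\otimes\mathbb{C}^m$, and conjugation by a unitary preserves rank.

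Then I would expand $\sigma$ in its spectral decomposition, $\sigma=\sum_{k=1}^{m}\mu_k\ketbra{\phi_k}{\phi_k}$ with $\mu_k>0$, so that $\hat p'=\sum_{k=1}^{m}\mu_k\,\Tr_B\ketbra{\phi_k}{\phi_k}$. Each term $\Tr_B\ketbra{\phi_k}{\phi_k}$ is the reduced state of a pure state on $\mathbb{C}^n\otimes\mathbb{C}^m$, hence by the Schmidt decomposition has operator rank at most $\min(n,m)$. Since the support of a sum of positive semidefinite operators equals the span of their supports, $\mathrm{rank}(\hat p')\le m\cdot\min(n,m)$. But $\mathrm{rank}(\hat p')=\mathrm{rank}(\mathbf{1}_A/n)=n$, so $n\le m\cdot\min(n,m)$. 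If $m\ge n$ then $m\ge n\ge\sqrt n$ trivially; otherwise $n\le m^2$, i.e.\ $m\ge\sqrt n$. In either case $m\ge\sqrt n$.

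There is no genuine obstacle here; the argument is elementary once one sees that rank is the right invariant to track. The only points needing a little care are the standard rank bound on partial traces of low-rank bipartite operators (via the Schmidt decomposition) and the choice of the pure-to-maximally-mixed transition, which simultaneously realizes a minimal input rank and a maximal output rank — precisely the tension that produces the inequality.
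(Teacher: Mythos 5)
Your proof is correct and follows essentially the same route as the paper's: both reduce to the pure-to-maximally-mixed transition, write the global state as a sum of $m$ rank-one terms (you via the spectral decomposition of $U(\hat p\otimes\mathbf{1}_B/m)U^\dagger$, the paper by expanding $\mathbf{1}_B/m$ in a basis before conjugating), bound each partial trace's rank by the Schmidt rank, and conclude $n\leq m^2$ by subadditivity of rank. The extra case distinction via $\min(n,m)$ is a harmless refinement of the paper's bound.
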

\begin{proof}
If all noisy state transitions can be implemented, then in particular a pure state can be mapped to a maximally mixed state. That is, there is a pure state $|\psi\rangle$ on $A=\C^n$ and a unitary $U$ on $AB$ such that
\begin{eqnarray*}
\frac{\mathbf{1}} n &=&\Tr_B\left[ U \left(|\psi\rangle\langle\psi|_A\otimes \frac{\mathbf{1}} m \right)U^\dagger\right]\\
&=& \frac 1 m \sum_{i=1}^m \Tr_B\left[ U\left(|\psi\rangle\langle\psi|_A\otimes |i\rangle\langle i|_B\right)U^\dagger\right],
\end{eqnarray*}
where $\{|i\rangle_{i=1}^m$ is any orthonormal basis of $B=\C^m$. The Schmidt rank of the pure state $U|\psi\rangle_A|i\rangle_B$ is upper-bounded by $m$, and so the rank of every density operator $\Tr_B[\ldots]$ on the right-hand side is upper-bounded by $m$. Due to the subadditivity of the rank, this means that the rank of the right-hand side is less than or equal to $m^2$. Hence ${\rm rank}(\mathbf{1}/n)=n\leq m^2$.
\end{proof}
We do not know whether this bound is tight. If it was, then this would say that any noisy operation on $k$ qubits can be achieved with a bath of $k/2$ qubits only (rounding up accordingly). This would be a significant tightening of quantum Horn's lemma.

\section{Thermal operations with small heat baths}
\label{SecThermal}
We now turn to the more general case of thermal operations. We will start by analyzing two-level systems in Subsection~\ref{SubsecQubit}, which illustrates most of the features of arbitrary-dimensional systems that will be proven in Subsection~\ref{SubsecGeneral}.
\subsection{The qubit case with a finite heat bath}
\label{SubsecQubit}
Consider a two-dimensional Hilbert space with orthonormal basis $\{\ket{1},\ket{2}\}$ and Hamiltonian
\[
	H_A = \Delta E \ \ketbra{2}{2},
\]
and assume that $\Delta E > 0$. The corresponding Gibbs state at inverse temperature $\beta>0$ is given by 
\[
{\hat \gamma}_A =\bigl(\,\ketbra{1}{1}+ e^{-\beta \Delta E}\ketbra{2}{2}\,\bigr) / \bigl(\, 1 + e^{-\beta \Delta E}\,\bigr).
\]
Now we would like to see which (classical) states $p'$ are thermomajorized by a given state $p$. According to Lemma~\ref{LemJanzing}, we need a stochastic matrix $D$ which satisfies
\[
	Dp = p' \quad\text{ and }\quad D\gamma = \gamma.
\]
In two dimensions the stochastic matrices that preserve $\gamma$ are parametrized by a single parameter $\alpha$:
\begin{equation}
D_\alpha= \begin{pmatrix}
\ 1-\alpha & \alpha\, {\gamma_1}/{\gamma_2} \\[.3em] \alpha & 1- \alpha \, \gamma_1 / \gamma_2 \
\end{pmatrix}
\label{eqGeneralD}
\end{equation}
with the requirement $0\leq\alpha \leq \gamma_2/\gamma_1$, where $\gamma_1,\gamma_2$ are the components of $\gamma=\gamma_A$.  The set of states thermomajorized by $p$ is thus also parametrized by $\alpha$, and we can use the notation $p'_\alpha:=D_\alpha p$. Three special cases are as follows:
\begin{itemize}
	\item $\alpha=0$ gives $D_\alpha=\mathbf{1}$ and $p'_\alpha=p$.
	\item $\alpha=\gamma_2$ gives $D_\alpha=\left(\begin{array}{cc} \gamma_1 & \gamma_1 \\ \gamma_2 & \gamma_2\end{array}\right)$, which maps all states $p$ to the thermal state $p'_\alpha=\gamma$.
	\item $\alpha=\gamma_2/\gamma_1=e^{-\beta\Delta E}$ gives the final state
	\begin{equation}
	    p'_\alpha=p^*:=\left(1-\frac{\gamma_2}{\gamma_1}p_1, \frac {\gamma_2}{\gamma_1} p_1\right).
	    \label{eqPStar}
	    \end{equation}
\end{itemize}
The set of states thermomajorized by $p$, including these special cases, is shown in Figure~\ref{fig_qubit}. In the case $\Delta E = 0$, the $D_\alpha$ are bistochastic, and we recover the case of standard majorization from Section~\ref{SecNoisy}.

\begin{figure}[!hbt]
\begin{center}
\includegraphics[angle=0, width=.4\textwidth]{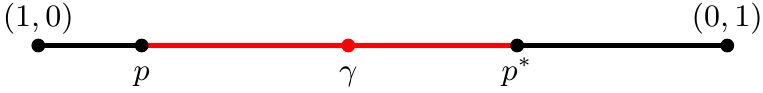}
\caption{The line shows the diagonal states of a qubit system with $(1,0)$ and $(0,1)$ corresponding to the eigenstates $\ket{1}$ and $\ket{2}$. The states thermomajorized by $p$ correspond to the points between $p$ and $p^*$.}
\label{fig_qubit}
\end{center}
\end{figure}
In order to implement the $D_\alpha$ as thermal operations, it is necessary to choose $H_B$ in a way that $H_A+H_B$ exhibits some degeneracies, since the unitaries in~(\ref{eqThermalMain}) have to commute with the total Hamiltonian. So the spectrum of $H_B$ has to contain the energy gap $\Delta E$, and furthermore it seems plausible that carrying the energy gap between many different energy levels should allow for a greater variety of transitions.

Let us therefore consider a truncated harmonic oscillator
\begin{equation}
H_B =  \Delta E \sum_{i=1}^{m} (i-1) \ketbra{i}{i},
\label{eqOszi}
\end{equation}
that is, an $m$-level system that has $\Delta E$ as energy difference between all neighbouring eigenstates. In the case $m=2$, we have one copy of the original system, and for $m\to\infty$ we obtain a harmonic oscillator.
The joint Hamiltonian of system and bath $H_{AB} = H_A + H_B$ becomes
\begin{equation*}
H_{AB}	= \Delta E\sum_{i=1}^2 \sum_{j=1}^m (i+j-2)\ketbra{i,j}{i,j}
\end{equation*}
where $\ket{i,j}_{AB} = \ket i_A \otimes \ket j_B$.  All energy levels of $H_{AB}$ except for the lowest and highest one are doubly degenerate. As explained above, this degeneracy is necessary to implement non-trivial thermal operations.

In the case $m=2$, we can implement one of the special cases above (apart from the trivial case), namely $D_\alpha$ for $\alpha=\gamma_2$. This is achieved by the energy-preserving permutation $\pi$ with $\pi\ket{12} = \ket{21}$ and $\pi\ket{21}=\ket{12}$ which swaps the two systems, leading to
$ \hat p \otimes \hat \gamma \mapsto 	\hat \gamma \otimes \hat p$.
Tracing out the second system, we end up with the Gibbs state of the qubit, independent of the state we started with.

By implementing other unitaries $U_{2\times 2}$ on the subspace ${\rm span}\{|12\rangle,|21\rangle\}$, we can in fact implement all maps $D_\alpha$ with $0\leq\alpha\leq \gamma_2$, and thus obtain all corresponding target state $p'_\alpha$. This follows from the simple observation that ${\rm SU}(2)$ is connected, and thus by varying over all $U_{2\times 2}\in {\rm SU}(2)$, we obtain target states $p'$ that vary continuously between $p'=p$ and $p'=\gamma$.

Can we obtain more target states than this? To answer this in the negative, consider the parametrization
\[
   U_{2\times 2}=\left(
      \begin{array}{cc}
      	  \kappa & -\lambda^* \\ \lambda & \kappa^*
      \end{array}
   \right)
\]
with $\kappa,\lambda\in\C$ such that $|\kappa|^2+|\lambda|^2=1$. Implementing this unitary on the degenerate subspace, we obtain a target state $p'(U_{2\times 2})$. A simple calculation shows that
\[
   p'(U_{2\times 2})=|\kappa|^2 p +|\lambda|^2 \gamma=
   |\kappa|^2 p'(\mathbf{1}_{2\times 2})+|\lambda|^2 p'(\pi_{2\times 2}),
\]
where $\pi$ is the swap introduced above. This shows that the extremal cases are generated by classical permutations (swap or identity on the degenerate subspace) and choosing general energy-preserving unitaries give us all states that are in the convex hull or ‘in between’ these two states, but not more.

Now we would like to see which states we can achieve for $m>2$. Consider the permutation
\begin{equation*}
\pi_\text{max}: \quad \ket{1,j} \longleftrightarrow \ket{2,j-1} \qquad \text{for } j=2,\ldots,m	
\end{equation*}
which swaps the basis vectors of all two-dimensional energy eigenspaces of $H_{AB}$. 
Intuitively, $\pi_{\max}$ is the classical permutation for which we expect to obtain the $D_\alpha$ with the largest possible value of $\alpha $ (for this specific heat bath). Calculating the thermal operation for $p=(1,0)$ and reading off $1-\alpha$ from $p'=(1-\alpha,\alpha)$ shows that $\pi_{\max}$ yields
\begin{equation*}
	1-\alpha = \gamma_1^B = \frac 1{\tr(e^{-\beta H_B})}.
\end{equation*}
which is indeed the minimal value for $1-\alpha$ that is achievable by choosing any energy-preserving permutation. This becomes clear from the the fact that all energy-preserving permutations can be written as $\pi_J$, where $J \subset \{2,\ldots,m\}$, and
\[
   \pi_J|1,j\rangle :=\left\{
      \begin{array}{ll}
      	  |2,j-1\rangle & \mbox{if }j\in J, \\
      	  |1,j\rangle & \mbox{otherwise.}
      \end{array}
    \right.
\]
That is, $J$ indicates in which subspaces we choose to swap the two basis vectors, and $\pi_{\max}=\pi_{\{2,\ldots,m\}}$. Choosing some permutation $\pi_J$ leads a corresponding value $\alpha$ of $1-\alpha = \sum_{j\in \{1,\ldots,m\}\setminus J} \gamma_j^B$ which is indeed minimal for $J=\{2,\ldots,m\}$.
We thus define 
\begin{equation}
\alpha_{\text{max}}^{(m)}= 1- \gamma_1^B 
	= 1- \frac{1-e^{-\beta \Delta E}}{1-e^{-m\,\beta\Delta E}},
\end{equation}
which is the maximal value of the parameter $\alpha$ that we can achieve with this kind of heat bath of dimension $m$, even if we choose not just a permutation but a more general unitary for the thermal operation. 
This can be seen similarly as in the case $m=2$. The energy spectrum of $H_{AB}$ is $0,\Delta E,\ldots,m\Delta E$, and all energies except for the highest and lowest are two-fold degenerate. Thus, all energy-preserving unitaries consist of $(m-1)$ independently chosen unitaries from ${\rm SU}(2)$ acting in the respective energy eigenspaces. For every single one of these eigenspaces, choosing a general unitary instead of a swap or the identity (as in $\pi_J$) will give a state that is a mixture of the two permutation cases, as above. Thus we can conclude that \emph{all target states $p'$ obtainable by energy preserving unitaries are in the convex hull of those that we obtain by energy preserving classical permutations.} 

So the set of states $p'$ that we can obtain with a truncated harmonic oscillator as a heat bath is
\[
   {\rm conv}\left\{p,p'_{\alpha_{\max}^{(m)}}\right\}.
\]
This means that the set of achievable states $p'$ is growing with increasing $m$, but never exhausts the set of all thermomajorized states: $p^*$, corresponding to $\alpha=e^{-\beta \Delta E}$, can never be achieved exactly for finite $m$. However, we can get arbitrarily close to $p^*$ by increasing the dimension $m$ since $\lim_{m \rightarrow \infty}\alpha_{\max}^{(m)} = e^{-\beta \Delta E}$. In particular, all states $p'\neq p^*$ which are thermomajorized by $p$ can be obtained \emph{exactly} with a finite heat bath of suitable size. This is also sketched in Figure~\ref{fig_oscillator}.

\begin{figure}[!hbt]
\begin{center}
\includegraphics[angle=0, width=.4\textwidth]{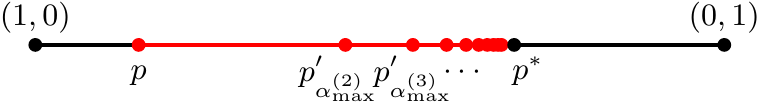}
\caption{The states $p'_{\alpha_{\max}^{(m)}}$ for $m=2,\ldots, 10$. They converge to $p^*$ in the limit $m\to\infty$, but they never reach it. That is, no finitely truncated harmonic oscillator can allow all possible state transitions on the qubit.}
\label{fig_oscillator}
\end{center}
\end{figure}

What if we choose any other (finite) heat bath Hamiltonian $H_B$ which is \emph{not} of the truncated harmonic oscillator form -- could it be that we achieve the state $p^*$ exactly? The following result answers this question in the negative, and gives a bound on how close we can get to $p^*$. It also says that truncated harmonic oscillators are in a certain sense optimal.

\begin{lemma}
\label{LemHarmonicOsci}
Let $A$ be a qubit system with energy gap $\Delta E>0$, and $B$ a finite-dimensional heat bath with arbitrary Hamiltonian $H_B$. Then the stochastic maps $D_\alpha$ of the form~(\ref{eqGeneralD}) which can be implemented by thermal operations on this system and heat bath satisfy
\begin{equation}
   \alpha\leq e^{-\beta\Delta E}\left(1-g_{\max}^B \gamma_{\max}^B\right),
   \label{eqdIneq}
\end{equation}
where $\gamma_{\max}^B=e^{-\beta E_{\max}^B}/\tr(e^{-\beta H_B})$ is the thermal occupation of the highest energy level of $H_B$, and $g_{\max}^B$ is its degeneracy. Moreover, inequality~(\ref{eqdIneq}) is tight, i.e.\ there is some thermal operation that achieves this value of $\alpha$, if and only if
\begin{itemize}
	\item for every energy eigenvalue $E$ of $H_B$, except for the highest, also $E+\Delta E$ is an energy eigenvalue, and
	\item the degeneracies satisfy $g_E^B\geq g_{E+\Delta E}^B$.
\end{itemize}
In particular, given some classical initial state $p$ on $A$, the state $p^*$ (corresponding to $\alpha=e^{-\beta\Delta E}$ as in~(\ref{eqPStar})) can never be created exactly from $p$ with any finite heat bath. However, all other states that are thermomajorized by $p$ can be obtained exactly with a finite heat bath.
\end{lemma}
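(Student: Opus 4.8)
\emph{Proof plan.} The strategy is to reduce the problem to an analysis inside each eigenspace of the total Hamiltonian $H_{AB}=H_A+H_B$. Let $\mathcal S$ denote the set of eigenvalues of $H_B$, and for $E\in\mathcal S$ let $P_E$ be the corresponding spectral projector, so $\Tr P_E=g_E^B$. Because $[U,H_A+H_B]=0$, the unitary $U$ is block-diagonal with respect to the eigenspaces of $H_{AB}$. In the eigenspace of total energy $F$, the vectors with the system in its ground state $\ket1_A$ span a ``ground branch'' $G_F:=\ket1_A\otimes(F\text{-eigenspace of }H_B)$ of dimension $g_F^B$, and those with the system excited span an ``excited branch'' $X_F:=\ket2_A\otimes((F-\Delta E)\text{-eigenspace of }H_B)$ of dimension $g_{F-\Delta E}^B$ (either summand being zero when the relevant energy is not in $\mathcal S$). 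For a classical input $p$, the restriction of $\hat p\otimes\hat\gamma_B$ to this eigenspace is $(p_1 e^{-\beta F}/Z_B)\Pi_{G_F}+(p_2 e^{-\beta(F-\Delta E)}/Z_B)\Pi_{X_F}$, with $\Pi_{G_F},\Pi_{X_F}$ the orthogonal projectors and $Z_B=\Tr e^{-\beta H_B}$.

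First I would compute the induced map on classical states. Set $t_F:=\Tr[\Pi_{X_F}U\Pi_{G_F}U^\dagger]\ge 0$, the weight transferred from the ground to the excited branch inside eigenspace $F$. Using $\Pi_{G_F}+\Pi_{X_F}=\mathbf 1$ on that eigenspace together with unitarity of $U$ one gets $\Tr[\Pi_{X_F}U\Pi_{X_F}U^\dagger]=g_{F-\Delta E}^B-t_F$, so that, summing the excited-branch weights over all eigenspaces and simplifying with $\sum_{E\in\mathcal S}e^{-\beta E}g_E^B=Z_B$,
\[
   p'_2=p_2+\big(p_1-e^{\beta\Delta E}p_2\big)\,\alpha,\qquad \alpha:=\sum_F\frac{e^{-\beta F}}{Z_B}\,t_F ,
\]
the sum running over the total energies $F$ with both $F$ and $F-\Delta E$ in $\mathcal S$. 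Comparing with $(D_\alpha p)_2=p_2+\alpha(p_1-e^{\beta\Delta E}p_2)$ obtained from~(\ref{eqGeneralD}), this shows that the thermal operation acts on \emph{every} classical input exactly as the single map $D_\alpha$. Since $U$ restricted to the $F$-eigenspace can be any unitary on the $(g_F^B+g_{F-\Delta E}^B)$-dimensional space $G_F\oplus X_F$, a projector-overlap estimate — the trace of the rank-$g_F^B$ projector $U\Pi_{G_F}U^\dagger$ against the rank-$g_{F-\Delta E}^B$ projector $\Pi_{X_F}$ is at most, and can be made equal to, the smaller of the two ranks — gives $0\le t_F\le\min(g_F^B,g_{F-\Delta E}^B)$.

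The bound~(\ref{eqdIneq}) then follows from two successive relaxations of $\alpha_{\max}=\sum_F \frac{e^{-\beta F}}{Z_B}\min(g_F^B,g_{F-\Delta E}^B)$. First, $\min(g_F^B,g_{F-\Delta E}^B)\le g_{F-\Delta E}^B$; re-indexing the sum by $E:=F-\Delta E$ this gives $\alpha_{\max}\le e^{-\beta\Delta E}\sum_{E:\,E,E+\Delta E\in\mathcal S}e^{-\beta E}g_E^B/Z_B$. Second, this last sum omits at least the term $E=E_{\max}^B$ from $\sum_{E\in\mathcal S}e^{-\beta E}g_E^B/Z_B=1$, hence is $\le 1-g_{\max}^B\gamma_{\max}^B$, which yields~(\ref{eqdIneq}). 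Equality forces equality in both steps, which is exactly the pair of itemized conditions: the second step is tight iff $E_{\max}^B$ is the only energy of $H_B$ whose $\Delta E$-shift leaves the spectrum (first bullet), and the first step is tight iff $\min(g_F^B,g_{F-\Delta E}^B)=g_{F-\Delta E}^B$ for every relevant $F$, i.e.\ (re-indexing $E=F-\Delta E$) iff the degeneracies at levels a gap $\Delta E$ apart obey the comparison of the second bullet. For a truncated harmonic oscillator all degeneracies equal $1$, both conditions hold, and one recovers $\alpha_{\max}^{(m)}=1-\gamma_1^B$, in agreement with the direct calculation in the main text.

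Finally, for the statements about $p^*$: as $U$ ranges over energy-preserving unitaries, each $t_F$ sweeps continuously over all of $[0,\min(g_F^B,g_{F-\Delta E}^B)]$, independently of the others (take $U$ trivial outside the $F$-block and interpolate inside it, using connectedness of the unitary group), so the attainable values of $\alpha$ form the full interval $[0,\alpha_{\max}]$. For any finite bath $g_{\max}^B\gamma_{\max}^B>0$, whence $\alpha_{\max}<e^{-\beta\Delta E}$ strictly; since $p^*$ in~(\ref{eqPStar}) corresponds to $\alpha=e^{-\beta\Delta E}$, it can never be produced exactly. On the other hand — for $p\neq\gamma$, where $\alpha\mapsto D_\alpha p$ is injective (the case $p=\gamma$ being trivial) — every other state thermomajorized by $p$ (Lemma~\ref{LemJanzing}) corresponds to some $\alpha_0<e^{-\beta\Delta E}$ and is reached exactly by a truncated harmonic oscillator of dimension $m$ chosen large enough that $\alpha_{\max}^{(m)}\ge\alpha_0$, using $\alpha_{\max}^{(m)}\to e^{-\beta\Delta E}$. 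The main obstacle I anticipate is the per-eigenspace bookkeeping of the middle two paragraphs — verifying that the flux formula produces the \emph{same} $\alpha$ for all inputs, establishing the sharp bound $t_F\le\min(g_F^B,g_{F-\Delta E}^B)$, and matching the two equality cases of the relaxations to the two bullets (care is needed in the re-indexing that produces the stated degeneracy inequality); the remaining steps are then routine.
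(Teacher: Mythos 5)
Your proposal follows essentially the same route as the paper's own proof: the decomposition of the energy-preserving unitary into blocks labelled by the total energy $F$, with a ``ground branch'' of dimension $g_F^B$ and an ``excited branch'' of dimension $g_{F-\Delta E}^B$, the per-block transfer weight $t_F$, the verification that the induced classical map is a single $D_\alpha$ with $\alpha=\sum_F e^{-\beta F}\,t_F/Z_B$, and the same two relaxations leading to~(\ref{eqdIneq}). (The paper bounds each block by a H\"older-type estimate $t_F\le g^B_{F-\Delta E}$ directly, while you first use the sharp bound $t_F\le\min(g_F^B,g_{F-\Delta E}^B)$ and then relax; this is a cosmetic difference.) Your treatment of the unattainability of $p^*$ and of the exact attainability of every other thermomajorized state via a sufficiently large truncated oscillator also matches the paper, and those parts are fine.

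The one step you flagged as delicate is, however, exactly where your write-up goes wrong as stated: tightness of your first relaxation, $\min(g_F^B,g_{F-\Delta E}^B)\le g_{F-\Delta E}^B$, holds if and only if $g_F^B\ge g_{F-\Delta E}^B$, i.e.\ after the substitution $E=F-\Delta E$ the condition $g_{E+\Delta E}^B\ge g_E^B$ --- the \emph{reverse} of the second bullet of the lemma, not the inequality you claim to recover by re-indexing. This is actually a problem with the statement you are matching against rather than with your method: the printed bullet (and the corresponding sentence in the paper's proof, saturation of $(*)$ being ``possible \ldots if there are enough of the latter'') has the inequality backwards. Concretely, a bath with spectrum $\{0,0,\Delta E\}$ satisfies both printed bullets, yet the only matching block has a one-dimensional sending (ground) branch, so starting from $p=(1,0)$ one gets at most $\alpha_{\max}=e^{-\beta\Delta E}/(2+e^{-\beta\Delta E})$, strictly below the bound $2e^{-\beta\Delta E}/(2+e^{-\beta\Delta E})$ of~(\ref{eqdIneq}); conversely, the spectrum $\{0,\Delta E,\Delta E\}$ violates the printed bullet but does attain the bound. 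So you should carry your equality analysis through honestly and state the corrected condition $g^B_{E+\Delta E}\ge g^B_E$ (together with the first bullet), rather than asserting agreement with the bullet as printed; note that the main inequality~(\ref{eqdIneq}), the impossibility of reaching $p^*$ with any finite bath, and the attainability of all other states are unaffected by this correction.
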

\begin{proof}
Denote the orthonormal basis vectors of $B$ by $|i\rangle$ and the corresponding energies by $E_i$ (there may be $i\neq j$ with $E_i=E_j$). Furthermore, denote the set of ``matching'' energies of $H_B$ by $M_B$, where an energy $E$ is matching if there is some $i$ such that $E_i=E$, \emph{and} if there is some $j$ such that $E_j=E-\Delta E$. That is, the matching energies $E\in M_B$ are those energy values which have a suitable energy gap downwards in order to allow for thermal operation in conjunction with the qubit on $A$.

To every $E\in M_B$, there is a corresponding global energy eigenspace
\[
   S_E:={\rm span}\{|1i\rangle,|2j\rangle\,\,|\,\, E_i=E\mbox{ and } E_j=E-\Delta E\},
\]
and clearly $E,E'\in M_B$ with $E\neq E'$ implies $S_E \perp S_{E'}$. Let $P_E$ be the orthogonal projector onto $S_E$, and $P_c:=\mathbf{1}-\sum_{E\in M_B} P_E$ the projector onto the orthogonal complement $S_c$ of the span of all these (matching) energy eigenspaces.

Any energy-preserving unitary must be a direct sum of unitaries $U_E$ acting on $S_E$ and $U_c$ acting on $S_c$. Define
\[
   (\hat p\otimes \hat\gamma)_E:= P_E (\hat p\otimes \hat \gamma)P_E
\]
and similarly $(\hat p\otimes\hat \gamma)_c$. Any thermal operation $D_\alpha$ acting on a probability vector $p$ which can be implemented with the given heat bath can thus be written
{\small
\[
   \left(D_\alpha(p)\right)^{\wedge}=
   \Tr_B\left(\sum_{E\in M_B} U_E (\hat p\otimes\hat\gamma)_EU_E^\dagger + U_c(\hat p\otimes\hat\gamma)_cU_c^\dagger\right).
\]}
In particular, since thermal operations do not generate off-diagonal elements~\cite{HO2013}, the right-hand side will yield a diagonal density matrix on $A$. Due to~(\ref{eqGeneralD}), we can determine the corresponding parameter $\alpha$ by setting $p=(1,0)$ and computing $\alpha=\langle 2|\left(D_\alpha(p)\right)^\wedge |2\rangle$. Since
\[
   (\hat p\otimes \hat \gamma)_c=|1\rangle\langle 1|\otimes \sum_{i:\,\, E_i\not\in M_B} \gamma_i^B |i\rangle\langle i|
\]
with $\gamma_i^B:=\exp(-\beta E_i)/Z$ and $Z=\tr\left(e^{-\beta H_B}\right)$, the energy-preserving unitary $U_c$ cannot map any $|1i\rangle$ in this sum to any state of non-zero overlap with $|2j\rangle$ for any $j$. Thus, $\Tr_B U_c(\hat p \otimes \hat \gamma)_c U_c^\dagger=|1\rangle\langle 1|$, and so
\begin{eqnarray*}
	\alpha &=& \langle 2|\left( \Tr_B \sum_{E\in M_B} U_E (\hat p\otimes \hat \gamma)_E U_E^\dagger\right) |2\rangle \\
	&=&\sum_{E\in M_B} \tr\left( |2\rangle\langle 2|\otimes\mathbf{1}_B\, P_E U_E(\hat p\otimes\hat\gamma)_E U_E^\dagger P_E\right)\\
	&\stackrel{(*)}\leq& \sum_{E\in M_B} \left\| P_E |2\rangle\langle 2|\otimes \mathbf{1}_B P_E\right\|_1 \cdot\|(\hat p\otimes \hat\gamma)_E\|_\infty\\
	&=&\sum_{E\in M_B} g_{E-\Delta E}^B e^{-\beta E}/Z,
\end{eqnarray*}
where $g_{E'}^B$ is the degeneracy of the energy $E'$ in $H_B$. Thus
\begin{eqnarray*}
	\alpha&\leq& \frac 1 Z \sum_{j:\,\, E_j+\Delta E\in M_B} e^{-\beta (E_j+\Delta E)}\\
       &=&\frac 1 Z e^{-\beta\Delta E}\sum_{j:\,\, E_j+\Delta E\in M_B}e^{-\beta E_j}\\
       &\stackrel{(**)}\leq & \frac 1 Z e^{-\beta \Delta E}\sum_{j:\,\, E_j\neq E_{\max}^B} e^{-\beta E_j}\\
       &=& \frac 1 Z e^{-\beta \Delta E}\left(Z-g_{\max}^B\cdot e^{-\beta E_{\max}^B}\right), 
\end{eqnarray*}
where $g_{\max}^B\in\mathbb{N}$ is the degeneracy of the highest energy level of $H_B$. Inequality $(**)$ holds with equality if and only if for all eigenvalues $E\neq E_{\max}^B$ of $H_B$ also $E+\Delta E$ is an eigenvalue. Furthermore, $(*)$ becomes an equality for some suitable choice of unitary if and only if every global eigenstate $|1i\rangle$ with $E_i=E$ can be unitarily mapped to another global energy eigenstate $|2j\rangle$ with $E_j=E-\Delta E$, which is possible by some suitable $U_E$ if there are enough of the latter.
\end{proof}	
Lemma~\ref{LemHarmonicOsci} has a compelling physical interpretation in terms of the \emph{third law of thermodynamics}. Think of the task to cool a physical system. In general, this needs two ingredients: first, a heat bath, and second, a work reservoir. We can model both by considering a qubit system in a  state $p=(0,1)$, that is, an excited eigenstate of energy $\Delta E$. Since $p$ thermomajorizes all other classical qubit states $p'$, it is clear that this state contains enough free energy to allow a transition to any other state $p'$ to arbitrary accuracy by coupling it to a suitable heat bath.

Suppose we would like to obtain another state
\begin{equation}
   p'=(1,e^{-\beta'\Delta E})/(1+e^{-\beta' \Delta E})
   \label{eqPStrich}
\end{equation}
of inverse temperature $\beta'=1/(k_B T')$. The third law of thermodynamics tells us that we cannot achieve temperature $T'=0$ in finite time. More in detail, Nernst's formulation~\cite{MOThirdLaw}, first stated in 1912~\cite{Nernst}, can be phrased as follows~\cite{Loebl,FowlerGuggenheim}: \emph{It is impossible by any procedure, no matter how idealized, to reduce any assembly to the absolute zero in a finite number of operations.} The thermal state of temperature zero is the ground state $p'=(1,0)$ which is nothing but the state $p^*$ from~(\ref{eqPStar}) (for our choice of initial state $p$). And, in fact, Lemma~\ref{LemHarmonicOsci} says that we cannot obtain $p'=p^*$ exactly with any finite heat bath. But we can say more.
\begin{theorem}[Third law of thermodynamics for qubits]
\label{TheThird}
   Let $A$ be a qubit system with energy gap $\Delta E$ which is in its excited energy eigenstate, and $B$ an $n$-dimensional heat bath with smallest and largest energy eigenvalues $E_{\min}^B$ and $E_{\max}^B$, which is in a thermal state at temperature $T$. If we apply any thermal operation that yields a new qubit state $p'$ with non-negative temperature $T'\geq 0$ (i.e.\ without population inversion), then
   \[
      T'\geq \frac{T\Delta E}{E_{\max}^B-F_B}
      \geq \frac{T\Delta E}{E_{\max}^B-E_{\min}^B+k_B T \log n},
   \]
   where $F_B=-k_B T \log \tr(\exp(-H_B/(k_B T)))$ is the free energy of the heat bath.
\end{theorem}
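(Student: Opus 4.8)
The plan is to reduce the theorem to inequality~(\ref{eqdIneq}) of Lemma~\ref{LemHarmonicOsci}, which already bounds the ``cooling power'' $\alpha$ attainable with an arbitrary finite bath, and then to translate that bound on $\alpha$ into a bound on the final temperature of the excited qubit. First I would note that a thermal operation generates no coherences and preserves the Gibbs state $\hat\gamma_A$ (because $\hat\gamma_A\otimes\hat\gamma_B$ commutes with $H_A+H_B$, hence with the unitary $U$ in~(\ref{eqThermalMain})). Consequently the given thermal operation, restricted to the diagonal states of the qubit, acts as a $\gamma$-preserving stochastic matrix, which in two dimensions must be of the form $D_\alpha$ from~(\ref{eqGeneralD}) for some parameter $\alpha$, and this $D_\alpha$ is by construction implementable by a thermal operation with the given system and heat bath. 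Hence Lemma~\ref{LemHarmonicOsci} applies and yields $\alpha\leq e^{-\beta\Delta E}(1-x)$, where $x:=g_{\max}^B\gamma_{\max}^B=g_{\max}^B e^{-\beta E_{\max}^B}/Z$ and $Z=\tr(e^{-\beta H_B})$.

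Next I would carry out the elementary output-state computation. Applying $D_\alpha$ to the excited state $p=(0,1)$ and using $\gamma_1/\gamma_2=e^{\beta\Delta E}$ gives $p'=(\alpha e^{\beta\Delta E},\,1-\alpha e^{\beta\Delta E})$, so the final inverse temperature $\beta'=1/(k_BT')$, defined by $p'_2/p'_1=e^{-\beta'\Delta E}$, satisfies
\[
   e^{-\beta'\Delta E}=\frac{1-\alpha e^{\beta\Delta E}}{\alpha e^{\beta\Delta E}},
\]
which is strictly decreasing in $\alpha$; hence $T'$ decreases with $\alpha$. Combining with the bound $\alpha\leq e^{-\beta\Delta E}(1-x)$ therefore gives $e^{-\beta'\Delta E}\geq x/(1-x)$, i.e.\ $T'\geq\Delta E/(k_B\log\tfrac{1-x}{x})$. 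The hypothesis $T'\geq 0$, equivalently $p'_1\geq p'_2$, forces $x\leq 1/2$, so the logarithm is non-negative and these steps are valid; the boundary case $x=1/2$ gives $T'=\infty$ and the theorem is trivial there.

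Finally I would estimate the denominator. Since $g_{\max}^B\geq 1$,
\[
   \frac{1-x}{x}=\frac{Z\,e^{\beta E_{\max}^B}}{g_{\max}^B}-1\ \leq\ Z\,e^{\beta E_{\max}^B},
\]
so $\log\frac{1-x}{x}\leq\log Z+\beta E_{\max}^B=\beta(E_{\max}^B-F_B)$, using $F_B=-k_BT\log Z=-\beta^{-1}\log Z$. Substituting this and $\beta=1/(k_BT)$ yields the first claimed inequality $T'\geq T\Delta E/(E_{\max}^B-F_B)$. The second follows from $Z=\sum_i e^{-\beta E_i}\leq n\,e^{-\beta E_{\min}^B}$, whence $F_B\geq E_{\min}^B-k_BT\log n$ and thus $E_{\max}^B-F_B\leq E_{\max}^B-E_{\min}^B+k_BT\log n$.

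The main obstacle here is bookkeeping rather than a hard new idea: one must verify that the ``no population inversion'' hypothesis is exactly what keeps $\log\frac{1-x}{x}$ and $E_{\max}^B-F_B$ non-negative, so that all the reciprocals and logarithms are monotone in the right direction and maximizing the achievable $\alpha$ genuinely minimizes the achievable $T'$; and one must argue cleanly that the thermal operation really does induce a $\gamma$-preserving stochastic $D_\alpha$ on the qubit, so that Lemma~\ref{LemHarmonicOsci} is directly applicable. Beyond that, the proof is a short algebraic estimate built on that lemma.
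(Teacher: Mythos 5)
Your proposal is correct and follows essentially the same route as the paper: invoke inequality~(\ref{eqdIneq}) of Lemma~\ref{LemHarmonicOsci}, compute the final inverse temperature from $D_\alpha$ acting on the excited state, use monotonicity in $\alpha$, and finish with the estimates $g_{\max}^B\geq 1$ and $Z_B\leq n\,e^{-\beta E_{\min}^B}$. The only differences are cosmetic (you keep $x=g_{\max}^B\gamma_{\max}^B$ slightly longer and spell out more explicitly why the thermal operation induces a $\gamma$-preserving $D_\alpha$, which the paper leaves implicit from Section~\ref{SubsecQubit}).
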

\begin{proof}
If we apply a map $D_\alpha$ of the form~(\ref{eqGeneralD}) to the excited state $p=(0,1)$, then $D_\alpha p=p'$ with $p'$ as in~(\ref{eqPStrich}) yields
\[
   \beta'=-\frac 1 {\Delta E}\log\left(\frac{e^{-\beta\Delta E}}{\alpha}-1\right),
\]
and the argument inside the $\log$ is strictly positive due to $\alpha<e^{-\beta\Delta E}$. Furthermore, $\beta'\in [0,+\infty]$ implies that $\alpha\geq \frac 1 2 e^{-\beta\Delta E}$. From eq.~(\ref{eqdIneq}), it follows in particular that $\alpha\leq e^{-\beta\Delta E}\left(1-\gamma_{\max}^B\right)$ (since $g_{\max}^B\geq 1$), and so
\begin{eqnarray*}
   \beta'&\leq&-\frac 1 {\Delta E}\log\left[ \left(1-\gamma_{\max}^B\right)^{-1}-1\right]\\
   &=& \frac 1 {\Delta E}\log\left( \left(\gamma_{\max}^B\right)^{-1}-1\right)
   \leq -\frac 1 {\Delta E}\log\gamma_{\max}^B\\
   &=& \frac 1 {\Delta E}\left(\beta E_{\max}^B+\log Z_B\right),
\end{eqnarray*}
where $Z_B=\tr(\exp(-\beta H_B))$, with $\beta=1/(k_B T)$ the inverse temperature. Substituting the definition of $F_B$ proves the first claimed inequality. The second one follows from the additional estimate
\begin{eqnarray*}
   \log Z_B&=&\log\sum_{i=1}^n e^{-\beta E_i^B}\leq \log\sum_{i=1}^n e^{-\beta E_{\min}^B}\\
   &=&\log n -\beta E_{\min}^B,
\end{eqnarray*}
where $E_i^B$ denotes the energy eigenvalues of $H_B$.
\end{proof}
In the special case of the truncated harmonic oscillator, with $H_B$ as in~(\ref{eqOszi}) and thus $E_{\max}^B=(m-1)\Delta E$, we get
\begin{equation}
   T'=\frac{\Delta E}{k_B}\left(
      \log\frac{e^{-\beta\Delta E}+e^{m\beta\Delta E}}{e^{\beta \Delta E}-1}
   \right)^{-1}\approx \frac T m
   \label{eqOsziCooling}
\end{equation}
for large $m$ (to first order in $1/m$).

For weakly interacting multi-particle baths, the denominators in the lower bounds for $T'$ in Theorem~\ref{TheThird} will be extensive, that is, proportional to the particle number $N$. In physically realistic scenarios, only a polynomial number $N={\rm poly}(t)$ of particles can interact with the qubit system within a finite time interval $[0,t]$. Supplemented by these plausible physical assumptions, Theorem~\ref{TheThird} tells us that
\[
   T'\gtrsim 1/t^r,
\]
where $t$ is the time in which the given thermodynamical protocol is supposed to be completed, and $r>0$ is some model-dependent exponent. This result is compatible with the findings in~\cite{MOThirdLaw}, where the authors specify the exponent $r$ in more detail, dependent on the spectrum of the heat bath. For example, they argue that $r=7$ for a radiation bath in three spatial dimensions, which is in some sense the largest physically plausible $r$. However, the approach of~\cite{MOThirdLaw} is somewhat different from the one in this paper. We will give a more detailed comparison, and some more physical intuition on our result, in the conclusions.

It is also interesting to see that the specific form of our lower bound in Theorem~\ref{TheThird} is very close to the temperature obtained via a cooling method described in~\cite{MOThirdLaw}. It relies on a different scenario that has been explored, for example, in~\cite{Browne}, which is a variation of a traditional bit reset protocol. Given any two-level system with energy gap $\Delta E$, raise the energy of the higher level isothermally while in contact with a heat bath of temperature $T$. Thermalization will bring the two-level system closer to its ground state. Traditionally, one would raise the energy level up to infinity, but suppose we stop at some $E_{\max}$, and subsequently reduce the higher energy level to its initial value. The resulting temperature of the qubit turns out to be $(T\Delta E)/(\Delta E + E_{\max})$, which is comparable to the bound in Theorem~\ref{TheThird} (and in particular~(\ref{eqOsziCooling})).

\subsection{General systems with a finite heat bath}
\label{SubsecGeneral}
In the previous subsection, we have seen by example of a qubit that the classical target states which can be obtained with a fixed finite heat bath (in particular, a truncated harmonic oscillator) are the \emph{convex hull} of those that can be achieved by classical permutation of energy levels on the same heat bath. We will now prove that this is true in general, up to some minor amendments.

To this end, let us introduce some notation. For any given initial classical state $p$  (with corresponding diagonal quantum state $\hat p$) and Hamiltonians $H_A$, $H_B$, we denote by $\mathcal{T}_Q(p,H_A,H_B)$ the set of all classical final states $p'$ that can be obtained by a suitable choice of energy-preserving unitary $U$ on $AB$, i.e.\ $[U,H_A+H_B]=0$, such that
\begin{equation}
   \hat p'=\Tr_B\left(U(\hat p\otimes\hat\gamma_B)U^\dagger\right),
   \label{eqThermal}
\end{equation}
where $\hat\gamma_B$ is the Gibbs state on $B$ for inverse temperature $\beta$. In the case of non-degenerate $H_A$, the final state $\hat p'$ will automatically be diagonal, because thermal operations cannot create coherences. If $H_A$ exhibits degeneracies, we can always make $\hat p'$ by implementing suitable unitary transformations in the degenerate subspaces.
Furthermore, by $\mathcal{T}_C(p,H_A,H_B)$, we denote the set of final states $p'$ which can be obtained by a suitable choice of energy-preserving \emph{permutation} $\pi$ on $AB$ such that
\begin{equation}
   \hat p'=\Tr_B\left(\pi(\hat p\otimes\hat\gamma_B)\pi^\dagger\right)
   \label{eqClass}
\end{equation}
(which will always result in diagonal states).
In other words, $\mathcal{T}_C(p,H_A,H_B)$ denotes the set of states that can be created from $p$ \textbf{classically} (which is a finite set), whereas $\mathcal{T}_Q(p,H_A,H_B)$ denotes those states that can be obtained \textbf{quantum-mechanically}. (These sets also depend on the inverse temperature $\beta$, which we regard as fixed in this section.) We have the following result.

\begin{theorem}
\label{TheQC}
	We have
	\begin{equation}
	   \mathcal{T}_Q(p,H_A,H_B)\subseteq {\rm conv}\, \mathcal{T}_C(p,H_A,H_B).
	   \label{eqQSubsC}
	\end{equation}
	If furthermore $H_A$ is non-degenerate, then we have
		\begin{equation}
	   \mathcal{T}_Q(p,H_A,H_B)= {\rm conv}\, \mathcal{T}_C (p,H_A,H_B).
	   \label{eqCEqualQ}
	\end{equation}
	If $H_A$ is degenerate, convex combinations of classical transitions can still be implemented exactly quantumly by adding an extra heat bath $C$ with trivial Hamiltonian $H_C=0$:
	\[
	   {\rm conv}\, \mathcal{T}_C(p,H_A,H_B) \subseteq \mathcal{T}_Q(p,H_A,H_B+0_C),
	\]
	where $\dim\, C=1+\sum_i(d_i-1)\leq \dim A$, and $d_i$ denotes the dimensions of the (energy) eigenspaces of $H_A$.
\end{theorem}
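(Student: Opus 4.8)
My plan is to reduce the whole statement to a blockwise application of the Schur--Horn theorem. First I would fix notation: write the distinct eigenvalues of $H_{AB}:=H_A+H_B$ as $E$ with joint eigenspace $S_E$; since everything in sight is diagonal in the product basis $\{|a,i\rangle\}$, each $S_E$ is spanned by those product basis vectors of energy $E$. An energy-preserving unitary commutes with $H_{AB}$, hence is a direct sum $U=\bigoplus_E U_E$ of unitaries $U_E$ on $S_E$, and an energy-preserving permutation is a direct sum $\bigoplus_E\pi_E$ of permutations of the product basis vectors inside each $S_E$. Let $v^{(E)}$ be the vector of diagonal entries of $\hat p\otimes\hat\gamma_B$ restricted to $S_E$, let $X_E$ be the (finite) set of all coordinate permutations of $v^{(E)}$, and let $L$ be the linear map sending a family $(w^{(E)})_E$ to the $A$-diagonal $q$ with $q_a=\sum_E\sum_{(a,i)\in S_E}w^{(E)}_{(a,i)}$; then $\mathcal T_C(p,H_A,H_B)=L\big(\prod_E X_E\big)$. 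For~(\ref{eqQSubsC}): if $\hat p'=\Tr_B(U(\hat p\otimes\hat\gamma_B)U^\dagger)$ is diagonal, then $p'=\diag(\hat p')$ is $L$ applied to the family of diagonals of the $U_E\,\diag(v^{(E)})\,U_E^\dagger$, and by Lemma~\ref{LemUnistoch} together with the Hardy--Littlewood--P\'olya theorem each such diagonal is majorized by $v^{(E)}$ and hence lies in $\conv X_E$. Since the product of the polytopes $\conv X_E$ is the convex hull of $\prod_E X_E$, and $L$ is linear, $p'\in L(\conv\prod_E X_E)=\conv L(\prod_E X_E)=\conv\mathcal T_C(p,H_A,H_B)$.

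Next I would treat the non-degenerate case. Here every thermal-operation output is automatically diagonal, so $\mathcal T_Q$ is exactly the set of attainable output diagonals, and the inclusion $\subseteq$ is~(\ref{eqQSubsC}). For the reverse inclusion I would use the \emph{converse} half of Schur--Horn: inside each $S_E$ one can realize, by a suitable $U_E$, any diagonal vector majorized by $v^{(E)}$, i.e.\ any point of $\conv X_E$. Choosing the $U_E$ independently and forming $U=\bigoplus_E U_E$, the attainable output diagonals are exactly $L(\prod_E\conv X_E)=\conv L(\prod_E X_E)=\conv\mathcal T_C(p,H_A,H_B)$, which gives~(\ref{eqCEqualQ}).

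For degenerate $H_A$ the same construction still produces an energy-preserving $U$ on $AB$ such that $\rho'_A:=\Tr_B(U(\hat p\otimes\hat\gamma_B)U^\dagger)$ has diagonal equal to the desired target $p'\in\conv\mathcal T_C$; since thermal operations create no coherence between different energies, $\rho'_A$ is blockdiagonal with respect to the eigenspaces of $H_A$, so its only spurious off-diagonal entries sit inside the degenerate blocks. To remove them I would adjoin a system $C$ with $H_C=0$, prepared in $\hat\gamma_C=\mathbf 1_C/m$ with $m:=1+\sum_i(d_i-1)$, and apply $W:=\sum_{c=0}^{m-1}\ketbra c c_C\otimes V_c$, where $V_c$ acts on the $i$-th eigenspace of $H_A$ as the diagonal matrix $\diag(1,\omega^c,\omega^{2c},\ldots,\omega^{(d_i-1)c})$ for a fixed primitive $m$-th root of unity $\omega$ (and trivially on one-dimensional eigenspaces). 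Each $V_c$ is diagonal, hence commutes with $H_A$, so $\tilde U:=W(U\otimes\mathbf 1_C)$ commutes with $H_A+H_B+0_C$. Since $m=1+\sum_i(d_i-1)\ge\max_i d_i$, within each degenerate block the map $X\mapsto\frac1m\sum_c V_c X V_c^\dagger$ averages every off-diagonal entry to zero; as $\rho'_A$ carries no coherence between different energies, this yields $\frac1m\sum_c V_c\rho'_A V_c^\dagger=\diag(\rho'_A)=\hat p'$. Tracing out $B$ and then $C$ from $\tilde U(\hat p\otimes\hat\gamma_B\otimes\hat\gamma_C)\tilde U^\dagger$ therefore produces $\hat p'$, proving $\conv\mathcal T_C(p,H_A,H_B)\subseteq\mathcal T_Q(p,H_A,H_B+0_C)$; and trivially $m=1+\sum_i(d_i-1)\le\sum_i d_i=\dim A$.

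The steps I expect to require the most care are the ``converse'' direction of the equality --- turning the blockwise Schur--Horn realizations into an \emph{exact} rather than approximate description of $\conv\mathcal T_C$, which rests on the (elementary but easy to fumble) polytope identity that the product of convex hulls equals the convex hull of the product --- and, in the degenerate case, verifying simultaneously that the adjoined controlled-phase unitary $W$ decoheres exactly the degenerate blocks of $\rho'_A$, that the composite $\tilde U$ is genuinely energy-preserving, and that the stated dimension of $C$ is enough. The inclusion~(\ref{eqQSubsC}) itself should be routine once the block decomposition and Lemma~\ref{LemUnistoch} are in place.
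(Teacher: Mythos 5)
Your proof is correct, and for the two main inclusions it is essentially the paper's argument in slightly different packaging: the block decomposition over global energy eigenspaces, Lemma~\ref{LemUnistoch} for the forward inclusion, and the converse (Horn) half of Schur--Horn applied blockwise for the reverse inclusion are all the same; your ``majorized set $=$ convex hull of permutations'' plus ``product of convex hulls $=$ convex hull of products'' is exactly the paper's combination of Birkhoff's theorem with Lemma~\ref{LemConvDirectSum}, and your blockwise Schur--Horn step is the paper's appeal to unistochastic realizations of majorization (\cite[Thm.~B.6]{MarshallOlkin}). Where you genuinely depart from the paper is the degenerate case: the paper's Lemma~\ref{LemThermalDecoherence} decoheres with a unitary controlled \emph{on the system}, applying distinct powers of a single cyclic shift of order $1+\sum_i(d_i-1)$ to the auxiliary system, whereas you control \emph{on the auxiliary system} and apply relative phases $\diag(1,\omega^c,\ldots,\omega^{(d_i-1)c})$ within each degenerate block of $A$. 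Your construction is valid --- the phase average kills an off-diagonal entry at block positions $j\neq k$ because $|j-k|<m$, each $V_c$ commutes with $H_A$, and $\rho'_A$ has no inter-energy coherences to worry about --- and it has the incidental advantage that it only requires $\dim C\geq\max_i d_i$, which can be strictly smaller than the stated $1+\sum_i(d_i-1)$ when several eigenspaces are degenerate; you only claim the stated bound, which certainly suffices. The one step worth spelling out more explicitly in a final write-up is the reduction of $\Tr_{BC}[\tilde U(\hat p\otimes\hat\gamma_B\otimes\tfrac{\mathbf 1}{m})\tilde U^\dagger]$ to $\tfrac1m\sum_c V_c\rho'_A V_c^\dagger$, which uses that the control register is diagonal and maximally mixed so the cross terms $c\neq c'$ vanish under the partial trace over $C$; this works exactly as you intend.
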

An illustration of Theorem~\ref{TheQC} (the non-degenerate case) is given in Figure~\ref{fig_tqc}. Note that the dimension of $C$ could be chosen even smaller if it turned out that a generalization of quantum Horn's lemma along the lines of Lemma~\ref{Lem23} was possible in general (for example, if Lemma~\ref{LemSqrt} turned out to be tight).
\begin{figure}[!hbt]
\begin{center}
\includegraphics[angle=0, width=.48\textwidth]{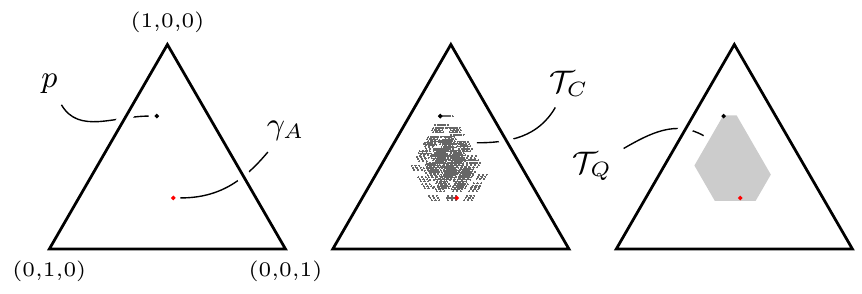}
\caption{For a given initial state $p$ in the classical probability simplex, given system Hamiltonian $H_A$, and bath Hamiltonian $H_B$, the set of states $\mathcal{T}_C(p,H_A,H_B)$ that can be obtained by classical thermal operations is a discrete set of points. In this picture we see an example calculation of this set for a three-level system where $H_A$ is non-degenerate and $H_B = H_A+H_A$ (middle triangle). However, according to Theorem~\ref{TheQC}, since $H_A$ is non-degenerate, the set of states that can be obtained via thermal quantum operations $\mathcal{T}_Q(p,H_A,H_B)$ is the convex hull of this set of states (right triangle), which is arguably a simpler and physically more realistic description. (Data used: $p=(65,22,13)/100$, $\gamma_A = (25,35,40)/100$, $\gamma_B=\gamma_A\otimes\gamma_A$.)}
\label{fig_tqc}
\end{center}
\end{figure}

\begin{proof}
Let $p'\in \mathcal{T}_Q(p,H_A,H_B)$, and let $U$ be a unitary such that~(\ref{eqThermal}) holds. Since $[U,H_A+H_B]=0$, we can decompose the total Hilbert space $AB$ into a direct sum $\bigoplus_i S_i$, where the $S_i$ are the energy eigenspaces of $H_A+H_B$. Then $U=\bigoplus_i U_i$, where the $U_i$ are block matrices acting on the $S_i$.

Let $\lambda:=p\otimes \gamma_B$. Since $\hat\lambda$ is diagonal in the energy eigenbasis, we can decompose $\hat\lambda=\bigoplus_i \hat\lambda_i$ accordingly, which means that $\lambda=\bigoplus_i \lambda_i$. Due to Lemma~\ref{LemUnistoch}, we have
\begin{equation}
U\hat\lambda U^\dagger = \bigoplus_i U_i \hat\lambda_i U_i^\dagger
= \bigoplus_i \left(\left(D_i \lambda_i\right)^\wedge  +\Omega_i\right),	
\label{eqUUStoch}
\end{equation}
where $\Omega_i$ are matrices with all diagonal elements zero, and $D_i:= U_i \circ U_i^*$ are bistochastic matrices. Let $\Omega:=\bigoplus_i \Omega_i$, then all diagonal elements of $\Omega$ are zero, i.e.\ $\langle jk|\Omega|jk\rangle=0$ for all $j,k$, and thus
\[
   \langle j|\Tr_B\Omega|j\rangle=0\mbox{ for all }j.
\]
Every $D_i$ is a bistochastic matrix acting on $S_i$, and thus, according to Birkhoff's Theorem~\cite{Bhatia}, a convex combination of permutations $\pi_i^{(j)}$ acting on $S_i$. Therefore, according to Lemma~\ref{LemConvDirectSum} in the appendix, $D:=\bigoplus_i D_i$ is a convex combinations of direct sums of permutations $\pi_i^{(k)}$ that act on the $S_i$, i.e.
\[
   D=\sum_k \mu_k\bigoplus_i \pi_i^{(k)}, \qquad \mu_k\geq 0,\enspace\sum_k\mu_k=1.
\]
Let $\pi^{(k)}:=\bigoplus_i \pi_i^{(k)}$, which is an energy-preserving permutation due to the block matrix form. We have
\begin{eqnarray*}
   p'_j &=& \langle j|\Tr_B U \hat\lambda U^\dagger |j\rangle = \langle j|\Tr_B\left[ (D\lambda)^\wedge\right]|j\rangle\\
   &=& \sum_k \mu_k \langle j|\Tr_B\left[\left(\pi^{(k)}\lambda\right)^\wedge \right]|j\rangle \\
   &=& \sum_k \mu_k \langle j|\Tr_B \left[\pi^{(k)} \hat\lambda \left(\pi^{(k)}\right)^\dagger\right] |j\rangle,
\end{eqnarray*}
and since $\hat p'$ is diagonal, this leads to
\begin{equation}
   \hat p'=\sum_k \mu_k \Tr_B\left[\pi^{(k)}(\hat p\otimes \hat\gamma)\left(\pi^{(k)}\right)^\dagger\right]
   \label{eqConvClass}
\end{equation}
which is a convex combination of outputs of maps of the form~(\ref{eqClass}). Thus $p'\in{\rm conv}\,\mathcal{T}_C(p,H_A,H_B)$, which proves~(\ref{eqQSubsC}). 

Now we would like to show the second half of the theorem, that every $p'\in{\rm conv}\,\mathcal{T}_C(p,H_A,H_B)$ can be obtained quantum-mechanically.
So suppose there are energy-preserving permutations $\pi^{(k)}$ and coefficients $\mu_k$ such that~(\ref{eqConvClass}) holds. Energy preservation implies that $\pi^{(k)}=\bigoplus_i \pi_i^{(k)}$, where the $\pi_i^{(k)}$ act on the global energy eigenspaces $S_i$. Let us define 
\[
\hat\lambda' = \sum_k \mu_k \ \pi^{(k)}\hat p\otimes \hat\gamma\left(\pi^{(k)}\right)^\dagger = \sum_k \mu_k\  \pi^{(k)}\hat \lambda\left(\pi^{(k)}\right)^\dagger
\]
which gives $ \hat p'=\Tr_B \hat\lambda'$. As before, we can decompose $\lambda=\bigoplus_i \lambda_i$ and $\lambda'=\bigoplus_i \lambda'_i$ into blocks corresponding to the global energy eigenspaces, and it follows that $\lambda'_i = \sum_k \mu_k \pi_i^{(k)} \lambda_i$, and so $\lambda_i\succ \lambda'_i$, i.e.\ $\lambda_i$ majorizes $\lambda'_i$ by a general result on majorization~\cite{MarshallOlkin}. Thus, according to~\cite[Thm.\ B.6]{MarshallOlkin}, there are unistochastic maps $D_i$ such that $\lambda'_i=D_i \lambda_i$ (note that $D_i$ need not be equal to $\sum_k \mu_k \pi_i^{(k)}$), and therefore unitary matrices $U_i$ such that $D_i=U_i\circ U_i^*$. 
We set $U:=\bigoplus_i U_i$ and by Lemma~\ref{LemUnistoch}, as in~(\ref{eqUUStoch}), we will have 
\[
   U\hat\lambda U^\dagger=\hat\lambda'+\Omega,
\]
where $\Omega=\bigoplus \Omega_i$ is a matrix with all diagonal elements zero. It follows that
\[
   \langle j|\Tr_B\left(U\hat\lambda U^\dagger\right)|j\rangle=\langle j|\Tr_B\hat\lambda'|j\rangle+\sum_k\langle jk|\Omega|jk\rangle=p'_j.
\]
For thermal operations, we know that $\Tr_B\left(U\hat\lambda U^\dagger\right)$ must be block-diagonal~\cite{HO2013} in the energy eigenbasis due to energy conservation. Thus, if $H_A$ is non-degenerate, it follows that this matrix is diagonal in the unique energy eigenbasis, and we have
\[
   \hat p'=\Tr_B\left[U(\hat p\otimes \hat\gamma)U^\dagger\right],
\]
which proves~(\ref{eqCEqualQ}). However, if $H_A$ is degenerate, we do not necessarily obtain a diagonal matrix. We deal with this by introducing an auxiliary system used to implement a decoherence map that sets the off-diagonal elements of $\rho'_A:=\Tr_B(U\hat\lambda U^\dagger)$ to zero. Since $\rho'_A$ is block-diagonal in  the energy eigenbasis, with block sizes (at most) $d_i$, Lemma~\ref{LemThermalDecoherence} below shows that we can implement the decoherence map by adding a fully degenerate auxiliary system of dimension $1+\sum_i(d_i-1)$ and performing a suitable thermal operation.
\end{proof}
Note that this theorem contains Lemma~\ref{LemNoisyN} from Section~\ref{SecNoisy} as a special case: in the case of a trivial $n\times n$ Hamiltonian $H_A=0$, we have a single degeneracy $d_1=n$. Due to Birkhoff's Theorem~\cite{Bhatia}, all bistochastic maps can be implemented as convex combinations of classical permutations on system $A$ alone. Therefore, denoting a trivial heat bath $B$ of dimension $\dim B=1$ (and arbitrary Hamiltonian $H_B\in\R$) simply by a bullet, we have
\[
   {\rm conv}\, \mathcal{T}_C(p,H_A,\bullet)=\{p'\,\,|\,\, p\succ p'\}.
\]
Then Theorem~\ref{TheQC} tells us that we can implement all these transitions exactly by noisy quantum operations with a heat bath $C$ of size
\[
   \dim\, C = 1+(d_1-1)=n.
\]

The following lemma has been used in the proof of Theorem~\ref{TheQC}.
\begin{lemma}[Thermal decoherence]
\label{LemThermalDecoherence}
	Let $A$ be an $n$-dimensional quantum system with Hamiltonian $H_A$, and $\{|i\rangle\}_{i=1}^n$ be an arbitrary orthonormal eigenbasis of $H_A$. Furthermore, suppose that $\rho_A$ is a block-diagonal density matrix on $A$, i.e.\ $[\rho_A,H_A]=0$. Then there is a heat bath $B$ with trivial Hamiltonian $H_B=0$ of dimension
	\[
	   \dim B =1+\sum_i(d_i-1)\leq n,
	\]
   where $d_i$ denotes the dimensions of the energy eigenspaces of $H_A$, such that there is a thermal operation on $A$ with this heat bath  $B$ that maps $\rho_A$ to $\rho'_A$, where $\langle i|\rho'_A|i\rangle=\langle i|\rho_A|i\rangle$, and $\langle i|\rho'_A|j\rangle=0$ for $i\neq j$.
\end{lemma}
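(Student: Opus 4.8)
The plan is to exploit the fact that the trivial bath Hamiltonian $H_B=0$ forces every energy-preserving unitary on $AB$ to be block-diagonal with respect to the energy eigenspaces of $H_A$, and then to dephase inside each block independently using the controlled-cyclic-shift trick already employed in the proof of Lemma~\ref{LemNoisyN}. Let $V_1,\dots,V_r$ be the distinct energy eigenspaces of $H_A$, with $\dim V_i=d_i$ and $\sum_i d_i=n$, and set $m:=1+\sum_i(d_i-1)=1+n-r$. Then $m\le n$ (since $r\ge 1$), and, crucially, $m\ge d_i$ for every $i$ (the remaining blocks contribute $\sum_{j\ne i}(d_j-1)\ge 0$). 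Take $B=\C^m$ with $H_B=0$, so $\hat\gamma_B=\mathbf 1_m/m$ and $H_{AB}=H_A\otimes\mathbf 1_B$; hence any energy-preserving $U$ has the form $U=\bigoplus_iU_i$ with $U_i$ unitary on $V_i\otimes B$. Since $[\rho_A,H_A]=0$, all matrix elements of $\rho_A$ between distinct eigenspaces vanish, so $\rho_A=\bigoplus_i\rho_i$ with $\rho_i$ supported on $V_i$, and it suffices to replace each $\rho_i$ by its diagonal part in the prescribed basis $\{|1\rangle,\dots,|d_i\rangle\}$ of $V_i$.

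For the construction, fix a single $m$-cycle permutation $\pi$ on $B$, so $\Tr(\pi^0)=m$ and $\Tr(\pi^t)=0$ for $1\le t\le m-1$. On each block set
\[
   U_i:=\sum_{k=1}^{d_i}|k\rangle\langle k|_{V_i}\otimes\pi^{k},
\]
which is unitary because $\sum_{k=1}^{d_i}|k\rangle\langle k|_{V_i}=\mathbf 1_{V_i}$, and put $U:=\bigoplus_iU_i$. Since $U_i$ acts on the eigenspace $V_i\otimes B$ of $H_{AB}$, the operator $U$ commutes with $H_{AB}$, so it is energy-preserving and defines a legitimate thermal operation. Computing exactly as in Lemma~\ref{LemNoisyN},
\[
   \Tr_B\!\left[U_i\Big(\rho_i\otimes\tfrac{\mathbf 1_m}{m}\Big)U_i^\dagger\right]
   =\frac1m\sum_{k,l=1}^{d_i}(\rho_i)_{kl}\,|k\rangle\langle l|\,\Tr(\pi^{k-l}),
\]
and for $k\ne l$ we have $1\le|k-l|\le d_i-1\le m-1$, so $\Tr(\pi^{k-l})=0$; only the terms $k=l$ survive, giving $\sum_k(\rho_i)_{kk}|k\rangle\langle k|=\diag(\rho_i)$. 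Summing over blocks, $\Tr_B[U(\rho_A\otimes\hat\gamma_B)U^\dagger]=\bigoplus_i\diag(\rho_i)=\rho'_A$ has the claimed entries $\langle i|\rho'_A|i\rangle=\langle i|\rho_A|i\rangle$ and $\langle i|\rho'_A|j\rangle=0$ for $i\ne j$.

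The only genuine subtlety is that the bath must be in the \emph{maximally mixed} state $\mathbf 1_m/m$, not a pure state, so a generic Stinespring dilation of the completely dephasing channel would not do. The controlled shift works precisely because it dephases $\rho_i$ separately for \emph{each} computational-basis bath state $|j\rangle$ (the vectors $\pi|j\rangle,\dots,\pi^{d_i}|j\rangle$ are pairwise orthogonal as soon as $m\ge d_i$), hence also for the uniform mixture; everything else is bookkeeping, namely that energy conservation forces the block form of $U$, that the orthogonality of the $V_i\otimes B$ makes $U=\bigoplus_iU_i$ a genuine unitary, and that the chosen dimension $m=1+\sum_i(d_i-1)$ is simultaneously $\ge\max_id_i$ (needed for the shift) and $\le n$ (claimed in the statement). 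I do not expect any real obstacle beyond this.
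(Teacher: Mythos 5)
Your proof is correct and takes essentially the same route as the paper: a controlled cyclic shift acting on a maximally mixed bath of dimension $1+\sum_i(d_i-1)$, with all unwanted off-diagonal elements killed because $\tr(\pi^t)=0$ for $t\not\equiv 0\ (\mathrm{mod}\ m)$, and with energy preservation guaranteed since the unitary is diagonal (block-diagonal) with respect to the eigenspaces of $H_A\otimes\mathbf{1}_B$. The only difference is bookkeeping — the paper uses a single controlled permutation whose cycle powers are indexed by a global subset $S$ containing $d_i-1$ indices per block, while you dephase each energy block separately with reused powers — and both versions rely on the block-diagonality of $\rho_A$ in the same way.
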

In other words: small heat baths allow to decohere block-diagonal quantum states. If $H_A$ is non-degenerate, i.e.\ all $d_i=1$, then $\rho_A$ is already diagonal and we do not need any heat bath at all, i.e.\ $\dim B=1$.
\begin{proof}
	We construct a thermal operation $\Phi$ on $A$ which sets suitable matrix elements of the input density matrix $\rho$ to zero. Suppose we are given any subset $S\subset \{1,2,\ldots, n\}$, then we will construct $\Phi$ such that it sets all elements in rows and columns appearing in $S$ to zero (except for the diagonal elements) and leaves all other matrix elements invariant. For example, suppose that $n=4$, and that $S=\{2,4\}$. Then we will construct a thermal operation $\Phi$ such that
	\[
	   \left(\begin{array}{ccc} \rho_{11} & \hdots & \rho_{14} \\	
	   \vdots & & \vdots \\
	   \rho_{41} & \hdots & \rho_{44}
\end{array}\right)\stackrel \Phi \mapsto
\left(\begin{array}{cccc}
   \rho_{11} & 0 & \rho_{13} & 0 \\
   0 & \rho_{22} & 0 & 0 \\
   \rho_{31} & 0 & \rho_{33} & 0 \\
   0 & 0 & 0 & \rho_{44}\end{array}
\right).
	\]
	Clearly, if $\rho$ is already block-diagonal with block sizes $d_i$ (e.g.\ in this case $d_1=2$ and $d_2=2$), then this map fully decoheres $\rho$, as long as we decohere $(d_i-1)$ rows and columns in every $i$-th block. In other words, we will choose the subset $S$ such that
	\[
	   |S|=\sum_i(d_i-1),
	\]
	and if $\Phi$ acts as stated, we achieve the decoherence as claimed in the statement of the lemma. Let $B=\C^d$, $d=|S|+1$, and $|1\rangle,\ldots,|d\rangle$ some orthonormal basis on $B$. Define the permutation matrix $\pi$ by $\pi|i\rangle=|i+1\rangle$ if $1\leq i \leq d-1$ and $\pi|d\rangle=|1\rangle$. We have $\pi^T=\pi^\dagger=\pi^{-1}$, and, for $j\in\mathbb{Z}$,
	\[
	   \tr(\pi^j) = \left\{
	      \begin{array}{cl}
	      d & \mbox{if }	 j\in\{\ldots,-2d,-d,0,d,2d,\ldots\}\\
	      0 & \mbox{otherwise}.
	      \end{array}
	   \right.
	\]
	Label the elements of $S$ (in increasing order) by $S=\{s_1,s_2,\ldots,s_{d-1}\}$, and define the unitary $U$ on $AB$ by
	\[
	   U=\sum_{i\not\in S} |i\rangle\langle i|\otimes\mathbf{1}+\sum_{j=1}^{|S|} |s_j\rangle\langle s_j|\otimes \pi^j.
	\]
	Clearly $[U,H_A\otimes\mathbf{1}]=0$, so if we regard $B$ as a quantum system with trivial Hamiltonian $H_B=0$, the following map $\Phi:\rho\mapsto\rho'$  is a thermal operation:
	\[
	   \rho'=\Tr_B\left[ U\left(\rho\otimes \frac{\mathbf{1}} d \right)U^\dagger\right].
	\]
	It is easy to check by direct calculation that
	\[
	   \rho'_{ij}=\left\{
	      \begin{array}{cl}
	      \rho_{ij} & \mbox{if } i=j \mbox{ or }(i\not\in S \mbox{ and }j\not\in S),\\
	      0 & \mbox{otherwise},
	      \end{array}
	   \right.
	\]
	where $\rho_{ij}=\langle i|\rho|j\rangle$, and similary for $\rho'_{ij}$. For example, if $i\not\in S$ and $j\in S$, then $\langle ik|U=\langle ik|$, and
	\[
	   U^\dagger |jk\rangle = \sum_{m=1}^{|S|} |s_m\rangle\langle s_m|j\rangle \otimes \pi^{-m} |k\rangle=|j\rangle\otimes \pi^{-m} |k\rangle,
	\]
	where $m$ is such that $s_m=j$. Thus, we get
	\begin{eqnarray*}
		\rho'_{ij}&=& \frac 1 d \sum_k \langle ik|\rho\otimes\mathbf{1}|j\rangle\otimes \pi^{-m}|k\rangle\\
		&=&\frac 1 d \rho_{ij} \sum_k \langle k|\pi^{-m}|k\rangle = \rho_{ij}\frac 1 d \tr(\pi^{-m})=0.
	\end{eqnarray*}
	All other cases of $i$ and $j$ can be checked similarly.
\end{proof}

\begin{figure}[!hbt]
\begin{center}
\includegraphics[angle=0, width=.48\textwidth]{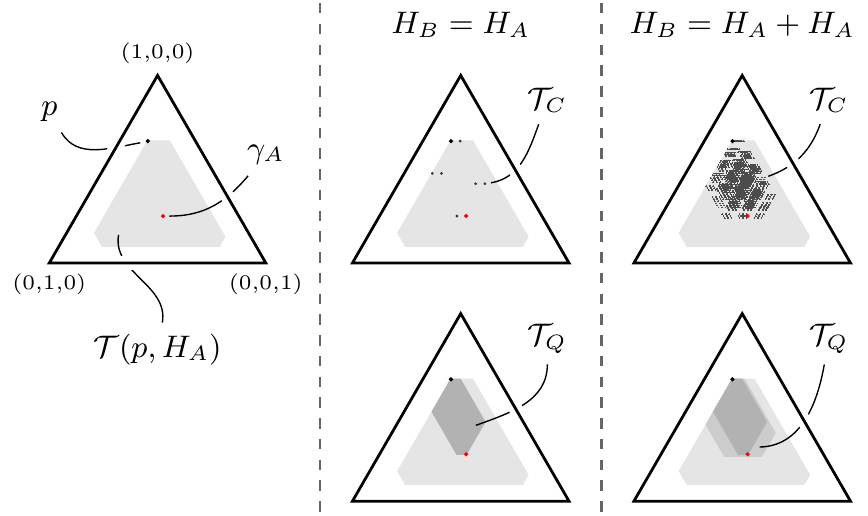}
\caption{The set of (classical) states in the probability simplex which are thermomajorized by $p$, denoted $\mathcal{T}(p,H_A)$, is a compact convex set (light gray). If $H_A$ is non-degenerate, then the set of states $\mathcal{T}_Q(p,H_A,H_B)$ that one can obtain with a fixed, finite heat bath, by quantum thermal operations, is a compact convex subset (dark gray, second row). With increasing resources (for example, by having several copies of the heat bath), this subset will grow (third column), and finally cover every point in the relative interior of $\mathcal{T}(p,H_A)$. However, some points on the boundary will never be covered (we have proven this for the qubit in Lemma~\ref{LemHarmonicOsci}, and conjecture that it is also true for higher-level systems). This can be interpreted as an instance of the third law of thermodynamics.}
\label{fig_interior}
\end{center}
\end{figure}

Theorem~\ref{TheQC} implies in particular the following result, which is illustrated in Figure~\ref{fig_interior}.
\begin{corollary}
\label{CorInterior}
Let $p$ be any classical state on a quantum system with some Hamiltonian $H_A$, and $\mathcal{T}(p,H_A)$ the (compact convex) set of all classical states that are thermomajorized by $p$. Then all states $p'\in{\rm relint}\, \mathcal{T}(p,H_A)$ (i.e.\ in the relative interior) can be obtained exactly by quantum thermal operations with finite heat baths, but not in general all states in the relative boundary  ${\rm rebd}\,\mathcal{T}(p,H_A)$.
\end{corollary}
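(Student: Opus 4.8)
\emph{Proof plan.}

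\textbf{The relative boundary.} This part is already settled by Lemma~\ref{LemHarmonicOsci}. Take $A$ to be a qubit with gap $\Delta E>0$ and fix any initial state $p\neq\gamma_A$ (so that $\gamma_2>0$, since $\Delta E<\infty$). By the discussion around~(\ref{eqGeneralD})--(\ref{eqPStar}) the set $\mathcal{T}(p,H_A)$ is the line segment $\{p'_\alpha:0\le\alpha\le\gamma_2/\gamma_1\}$, a one-dimensional compact convex set whose relative boundary is the two-point set $\{p,p^*\}$. Lemma~\ref{LemHarmonicOsci} states that $p^*$ can never be obtained exactly from $p$ with any finite heat bath, so $p^*\in{\rm rebd}\,\mathcal{T}(p,H_A)$ is a relative-boundary state that is not exactly reachable. (That $\mathcal{T}(p,H_A)$ is compact convex in general follows because, by the definition of $\succ_\gamma$, it is the image of the compact convex polytope of Gibbs-preserving stochastic matrices under the linear map $D\mapsto Dp$.)

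\textbf{The relative interior: strategy.} The idea is to sandwich a given $p'\in{\rm relint}\,\mathcal{T}(p,H_A)$ between finitely many \emph{classically} reachable states and then invoke Theorem~\ref{TheQC}. Two inputs are needed. First, by the easy direction of Lemma~\ref{LemJanzing} we have $\mathcal{T}_C(p,H_A,H_B)\subseteq\mathcal{T}(p,H_A)$ for every bath, and by its approximation direction --- with $H_B$ chosen as sufficiently many non-interacting copies of $H_A$ --- the union $\bigcup_{H_B}\mathcal{T}_C(p,H_A,H_B)$ is \emph{dense} in $\mathcal{T}(p,H_A)$. Second, Theorem~\ref{TheQC} gives ${\rm conv}\,\mathcal{T}_C(p,H_A,H_B)\subseteq\mathcal{T}_Q(p,H_A,H_B+0_C)$ for a suitable finite, fully degenerate auxiliary system $C$ with $\dim C\le\dim A$ (and equality without any $C$ when $H_A$ is non-degenerate). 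Hence it is enough to exhibit, for each such $p'$, a single finite bath $H_B$ with $p'\in{\rm conv}\,\mathcal{T}_C(p,H_A,H_B)$.

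\textbf{The relative interior: steps.} Let $d=\dim\,{\rm aff}\,\mathcal{T}(p,H_A)$; the case $d=0$ is trivial since then $p'=p$. As $p'$ lies in the relative interior, choose affinely independent points $q_0,\dots,q_d\in\mathcal{T}(p,H_A)$ spanning a $d$-simplex with $p'\in{\rm relint}\,{\rm conv}\{q_0,\dots,q_d\}$ (e.g.\ a small regular simplex centred at $p'$ inside the affine hull). By the density noted above, replace each $q_i$ by a point $q_i'\in\mathcal{T}_C(p,H_A,H_B^{(i)})$ for some finite bath $H_B^{(i)}$, with $\|q_i'-q_i\|$ small enough that the $q_i'$ remain affinely independent and still satisfy $p'\in{\rm relint}\,{\rm conv}\{q_0',\dots,q_d'\}$; this is possible because, for a fixed point, ``lying in the relative interior of the simplex spanned by $x_0,\dots,x_d$'' is an open condition on the tuple $(x_0,\dots,x_d)$ within the regime of affine independence. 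Now set $B:=B^{(0)}\otimes\cdots\otimes B^{(d)}$ with the non-interacting Hamiltonian $H_B=\sum_i H_B^{(i)}$, so that $\hat\gamma_B=\bigotimes_i\hat\gamma_{B^{(i)}}$. For each $i$, the energy-preserving permutation witnessing $q_i'\in\mathcal{T}_C(p,H_A,H_B^{(i)})$, tensored with the identity on the remaining factors, is again energy-preserving on $A\otimes B$, and tracing out those factors contributes only a scalar $1$; hence $q_i'\in\mathcal{T}_C(p,H_A,H_B)$ for every $i$. Therefore
\[
   p'\in{\rm conv}\{q_0',\dots,q_d'\}\subseteq{\rm conv}\,\mathcal{T}_C(p,H_A,H_B)\subseteq\mathcal{T}_Q(p,H_A,H_B+0_C),
\]
so $p'$ is obtained \emph{exactly} by a quantum thermal operation with the finite heat bath $B\otimes C$, as claimed.

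The only steps requiring any care are the geometric perturbation argument (small displacements of the simplex vertices keep $p'$ strictly interior and preserve affine independence) and the bookkeeping that a witnessing permutation for a small bath can be promoted to one for the tensored bath; both are routine. Everything else is a direct combination of Lemma~\ref{LemJanzing}, Theorem~\ref{TheQC}, and Lemma~\ref{LemHarmonicOsci}.
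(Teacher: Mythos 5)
Your argument is correct and follows essentially the same route as the paper's proof: density of classically reachable states (Lemma~\ref{LemJanzing}), a simplex-perturbation argument for points in the relative interior (which the paper isolates as Lemma~\ref{LemConvexGeometry} in the appendix, while you inline it), tensoring the finitely many witnessing baths together, and then Theorem~\ref{TheQC}, with the boundary counterexample supplied by Lemma~\ref{LemHarmonicOsci}. No substantive differences.
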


\begin{proof}
We use Lemma~\ref{LemJanzing} which shows that all states $p'\in\mathcal{T}(p,H_A)$ can be obtained to arbitrary accuracy by classical thermal operations with suitable heat baths. Formally, if we denote the set of all states that can be obtained from $p$ classically with any possible heat bath by
\[
   \mathcal{T}_C(p,H_A):=\bigcup_{\dim B\in\mathbb{N}, H_B} \mathcal{T}_C(p,H_A,H_B),
\]
then $\mathcal{T}_C(p,H_A)$ is dense in $\mathcal{T}(p,H_A)$. It follows with a bit of convex geometry (Lemma~\ref{LemConvexGeometry} in the appendix) that
\[
   {\rm relint}\,\mathcal{T}(p,H_A)\subseteq {\rm conv}\,\mathcal{T}_C(p,H_A).
\]
By construction, if $p'\in {\rm conv}\,\mathcal{T}_C(p,H_A)$, then there exist $m\in\mathbb{N}$, $\lambda_1,\ldots,\lambda_m\geq 0$ with $\sum_i \lambda_i=1$, heat baths $B_i$ with Hamiltonians $H_{B_i}$, and states $q^{(i)}$ on $A$ such that $p'=\sum_{i=1}^m \lambda_i q^{(i)}$ and $q^{(i)}\in\mathcal{T}_C(p,H_A,H_{B_i})$. If we have a composite heat bath on the tensor product $B_1\otimes\ldots \otimes B_m$ with Hamiltonian $H_{B_1}+\ldots+H_{B_m}$, then
\[
   \mathcal{T}_C(p,H_A,H_{B_i})\subseteq \mathcal{T}_C(p,H_A,H_{B_1}+\ldots+H_{B_m}),
\]
since we can always perform a classical permutation that ignores (acts as the identity on) some of the tensor factors.
Thus, $q^{(i)}\in \mathcal{T}_C(p,H_A,H_{B_1}+\ldots+H_{B_m})$, and so $p'\in {\rm conv}\,\mathcal{T}_C(p,H_A,H_{B_1}+\ldots+H_{B_m})$ which, according to Theorem~\ref{TheQC}, is contained in $\mathcal{T}_Q(p,H_A,H_{B_1}+\ldots+H_{B_m}+0_C)$, with $C$ a trivial heat bath of suitable finite size. Therefore, all elements of ${\rm relint}\,\mathcal{T}(p,H_A)$ can be obtained exactly by quantum thermal operations with finite heat baths.

On the other hand, states in the relative boundary cannot always be obtained exactly. This is shown by example in Lemma~\ref{LemHarmonicOsci}.
\end{proof}

\section{Conclusions}
\label{SecConclusions}
The study of quantum operations with finite heat baths reveals some interesting mathematical structure and physical insights. In this paper, we have analyzed basic properties of noisy and thermal operations with finite heat baths. In order to study whether some state transitions are unattainable with finite baths (resembling the third law of thermodynamics), we first had to remove an unphysical apparent``unattainability'' that is simply a mathematical artefact of a semiclassical treatment: the fact that infinite auxiliary systems are needed for some simple state transitions if one restricts to classical permutations in the energy eigenbasis.

We did this by proving Corollary~\ref{CorInterior}, which is also sketched in Figure~\ref{fig_interior}: the target states that can be obtained from a given (classical) initial state with unbounded heat baths is a compact convex set. All the states in its interior can be obtained exactly via thermal quantum operations with finite heat baths. It is then (some of) the states in the boundary of this set that become the subject of potential unattainability results.

In the case of a two-level quantum system, this boundary consists of two states (cf.\ Figure~\ref{fig_oscillator}): the initial state $p$ itself, and another state $p^*$ that is in some sense ``thermodynamically antipodal'' to $p$. In fact, we have shown that $p^*$ can never be obtained exactly from $p$ with any finite heat bath. In the case of a pure excited initial state, this led us to an instance of the third law of thermodynamics (Theorem~\ref{TheThird}), giving a (physically meaningful) lower bound on the temperature to which the system can be cooled by using any finite-dimensional heat bath. While we have formally proven this result only for qubits, it seems plausible that an analogous statement holds for higher-dimensional quantum systems, too. This gives an intuitive structural picture of unattainability in terms of the geometry of the state space (cf.\ Figure~\ref{fig_interior}), and it suggests a generalization of the third law to situations beyond cooling to absolute zero (since $p^*$ is not in general the ground state).

We can elucidate the relation to other recent unattainability results by elaborating on the physical intuition of our result. Consider cooling with thermal operations, as in the scenario of Figure~\ref{fig_cooling}. Denoting the ground state of the qubit by $|1\rangle$ and the excited state by $|2\rangle$, we would like to apply an energy-preserving unitary to the global initial state $|2\rangle\langle 2|\otimes\hat\gamma_B$ in a way that \emph{moves as much probability weight as possible} (in fact, all of it) into the local ground state, i.e.\ the subspace spanned by the global eigenstates $|1i\rangle$. However, if $|m\rangle$ denotes the maximal energy level of the bath (assuming for the moment that it is unique), then the (globally maximal) energy level $|2m\rangle$ is ``trapped'': energy preservation forbids to move its probability weight into any other subspace. And since the bath starts in a thermal state, this ``trapped'' population will be non-zero, and it will not be transferred into the ground state. We expect that this simple intuition will remain robust also in more complicated cases beyond the simple qubit scenario, for example in cases where the quantum system $A$ is itself composite, $A=A_1 A_2$. This encompasses in particular situations in which $A_1$ is the system of interest (say, a quantum system to be cooled), while $A_2$ is a work storage system, such as a weight~\cite{Skrzypczyk} or work bit~\cite{HO2013}.

This ``inability to move probability weight around'' in high-lying energy levels of the heat bath is also a major ingredient of the version of the third law proven in~\cite{MOThirdLaw}. In our setting, however, this inability comes from an assumption of finite-dimensionality of the bath, while in~\cite{MOThirdLaw}, it is the bath's spectral density together with the assumption of a strict upper bound on the available work $w_{\max}$ that limits the available energy levels. Our approach should not be understood as claiming that physical heat baths are generically finite-dimensional. Instead, the motivation is to give a very general formulation of the idea that we can only engineer interactions with a finite number of energy levels of the bath in a controlled way within a thermodynamic protocol. If understood in this way, the approach of~\cite{MOThirdLaw} is conceptually related to ours (but~\cite{MOThirdLaw} goes into much more physical detail).

Our unattainability statement (and the one in~\cite{MOThirdLaw}) is also different from the scenario studied by Janzing et al.\ in~\cite{Janzing}, where they consider the amount of non-thermal \emph{resource states} needed to cool a system (close) to absolute zero. In contrast, our formulation assumes that all non-thermal resources are in principle there (e.g.\ the energy of the excited state in Figure~\ref{fig_cooling} is enough for cooling to zero), but they can only be harvested by thermodynamic protocols in the presence of an unbounded heat bath.

Several further recent publications have considered limitations for cooling that arise from finite heat baths, cf.~\cite{Reeb} and~\cite{Mahler2011} (see also~\cite{Wu} and~\cite{Ticozzi}), but within approaches that are not formulated within the resource-theoretic framework of thermal operations. Instead, arbitrary (in general energy-non-preserving) unitaries are allowed, and the energy cost or yield is taken into account in retrospect, by bookkeeping the system's average energy content. That is, the fundamental microscopic reversibility of the physical laws is taken into account directly by restricting to unitary transformations, but the principle of total energy conservation is only considered indirectly.

In the context of the third law, this leads to a subtle restriction that comes from the resulting focus on entropic properties. Typically, in approaches that allow arbitrary unitaries, the impossibility of cooling to absolute zero is ultimately attributed to a simple property of linear algebra: namely, that \emph{unitary operations cannot create zero eigenvalues from non-zero eigenvalues}. In contrast, our setup yields more severe restrictions that rely on microscopic reversibility \emph{in conjunction with} total energy preservation. The simplest example is the one depicted in Figure~\ref{fig_cooling}: even though we start with a pure state of the system, the pure ground state is still unattainable.

Our paper leaves several interesting open questions. First, we have shown by example in Lemma~\ref{Lem23} that quantum Horn's lemma is true (Lemma~\ref{LemNoisyN}), but not optimal: for example, all noisy state transitions on a three-dimensional quantum system can be accomplished by means of a two-dimensional auxiliary system. What is the smallest possible bath dimension in general? Is the bound in Lemma~\ref{LemSqrt} tight? If so, then this would also improve our more general result for non-trivial Hamiltonians, Theorem~\ref{TheQC}.	 As we have shown, this question is closely related to a version of the quantum marginal problem.

From the point of view of the physical interpretation, the most interesting open problem is to analyze in more detail whether, and if so in what way, our third law for qubits (Theorem~\ref{TheThird}) generalizes to higher-dimensional quantum systems and scenarios with a work storage device in a physically meaningful way. Restricting the dimension of the heat bath, as we have done here, is only one of several possibilities to study unattainability results in thermodynamics. But the picture of some states, not necessarily of temperature zero, that lie ``on the edge'' of the set of attainable states (as in Figure~\ref{fig_interior}) is a quite suggestive hint towards possible generalizations beyond perfect cooling.

\section*{Acknowledgments}
We are grateful to Manfred Salmhofer for discussions, and we would like to thank Michael Walter and Matthias Christandl for drawing our attention to~\cite{LiPoonWang} and~\cite{Vergne}.

\onecolumngrid

\section{Appendix}
The following lemma gives a partial classification of the set of \emph{maps} that can be implemented via noisy operations on any $n$-dimensional system $A$ with an $n$-dimensional bath $B$. Since $p\succ p'$ implies that there is a unistochastic map $D$ with $Dp=p'$~\cite[Thm.\ B.6]{MarshallOlkin}, the lemma below implies Lemma~\ref{LemNoisyN} as a simple corollary. It can therefore be interpreted as yet another generalization of quantum Horn's lemma.
\begin{lemma}
\label{Lem3}
Let ${\rm Uni}(n)$ and ${\rm Bi}(n)$ be the sets of uni\-stochastic resp.\ bistochastic maps on $\R^n$, i.e.
\begin{eqnarray*}
	{\rm Uni}(n)&:=& \left\{ D\in\R^{n\times n}\,\,|\,\, \exists U: U^\dagger U=\mathbf{1}, D_{ij}=|U_{ij}|^2\right\},\\
	{\rm Bi}(n)&:=&\left\{ D\in\R^{n\times n}\,\,\left|\,\, \sum_i D_{ij}=\sum_j D_{ij}=1, D_{ij}\geq 0\right.\right\}.
\end{eqnarray*}
Moreover, let ${\rm Noisy}(n,n)$ be the set of maps $p\mapsto p'$ on $\R^n$ which can be written in the form
\[
   \hat p'=\Tr_B\left[ U \left( \hat p \otimes \frac{\mathbf{1}}n\right) U^\dagger\right]
\]
with some unitary $U$ on $\C^n\otimes \C^n$. Then we have
\begin{eqnarray*}
   {\rm Uni}(2) &=& {\rm Noisy}(2,2)={\rm Bi}(2),\\
   {\rm Uni}(n)&\subsetneq& {\rm Noisy}(n,n)\subseteq {\rm Bi}(n) \qquad \mbox{for }n\geq 3.
\end{eqnarray*}
Moreover, the inclusion ${\rm Noisy}(n,n)\subseteq {\rm Bi}(n)$ is strict if and only if ${\rm Noisy}(n,n)$ is not convex.
\end{lemma}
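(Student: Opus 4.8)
The plan is to first dispatch the inclusions that hold for every $n$, then produce an explicit element separating ${\rm Uni}(n)$ from ${\rm Noisy}(n,n)$ when $n\geq 3$, and finally read off the convexity equivalence from Birkhoff's theorem.

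Three facts hold for all $n$. (i)~${\rm Uni}(n)\subseteq {\rm Noisy}(n,n)$: if $D=V\circ V^*$ for a unitary $V$ on $\C^n$, then the unitary $U=W(V\otimes\mathbf{1}_B)$ built in the proof of Lemma~\ref{LemNoisyN} (with $W$ the cyclic-shift decoherence unitary on $\C^n\otimes\C^n$) implements $p\mapsto\mathrm{diag}(V\hat pV^\dagger)$, which by Lemma~\ref{LemUnistoch} equals $p\mapsto Dp$. (ii)~${\rm Noisy}(n,n)\subseteq {\rm Bi}(n)$: every noisy operation is trace-preserving and unital, because $\Tr_B[U(\mathbf{1}_A\otimes\mathbf{1}_B/n)U^\dagger]=\Tr_B[\mathbf{1}_{AB}]/n=\mathbf{1}_A$, so the induced linear map $p\mapsto p'$ on diagonal states is doubly stochastic. (iii)~every permutation matrix $P_\sigma$ lies in ${\rm Noisy}(n,n)$, realized by $U=P_\sigma\otimes\mathbf{1}_B$. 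For $n=2$, the elementary fact that every $2\times 2$ bistochastic matrix is orthostochastic (take a real rotation) gives ${\rm Uni}(2)={\rm Bi}(2)$, and then (i) and (ii) squeeze ${\rm Noisy}(2,2)$ to coincide with both.

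For $n\geq 3$ I would exhibit one witness in ${\rm Noisy}(n,n)\setminus{\rm Uni}(n)$. Let $\sigma_a=(2\,3)$ and $\sigma_b=(1\,2)$, both acting on $\{1,\dots,n\}$ and fixing $4,\dots,n$, and set
\[
   D^{(n)}:=\tfrac1n\bigl((n-1)P_{\sigma_a}+P_{\sigma_b}\bigr),
\]
which is bistochastic since all row and column sums of the integer matrix $(n-1)P_{\sigma_a}+P_{\sigma_b}$ equal $n$. It lies in ${\rm Noisy}(n,n)$ via the controlled permutation
\[
   U=\sum_{k=1}^{n-1}P_{\sigma_a}\otimes\ketbra{k}{k}_B + P_{\sigma_b}\otimes\ketbra{n}{n}_B,
\]
which is unitary and, after tracing out $B$, sends $\hat p$ to the diagonal matrix with diagonal $D^{(n)}p$. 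However $D^{(n)}$ is block-diagonal, with an $(n-3)\times(n-3)$ identity block and the $3\times 3$ block $\tfrac1n\left(\begin{smallmatrix}n-1&1&0\\1&0&n-1\\0&n-1&1\end{smallmatrix}\right)$; the zero pattern of $D^{(n)}$ forces any unitary with $|U_{ij}|^2=D^{(n)}_{ij}$ to share this block structure, so $D^{(n)}\in{\rm Uni}(n)$ would require the $3\times 3$ block to be unistochastic. But a unitary of the form $\left(\begin{smallmatrix}u_{11}&u_{12}&0\\u_{21}&0&u_{23}\\0&u_{32}&u_{33}\end{smallmatrix}\right)$ has its first two rows orthogonal only if $\overline{u_{11}}\,u_{21}=0$, which is impossible when $|u_{11}|^2=(n-1)/n\neq 0$ and $|u_{21}|^2=1/n\neq 0$. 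Hence $D^{(n)}\notin{\rm Uni}(n)$, and with (i) and (ii) this gives ${\rm Uni}(n)\subsetneq{\rm Noisy}(n,n)\subseteq{\rm Bi}(n)$ for $n\geq 3$.

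The equivalence then follows. Since ${\rm Bi}(n)$ is convex, if ${\rm Noisy}(n,n)$ is not convex it cannot equal ${\rm Bi}(n)$, so the inclusion ${\rm Noisy}(n,n)\subseteq {\rm Bi}(n)$ is strict; conversely, if ${\rm Noisy}(n,n)$ is convex then, by (iii) and Birkhoff's theorem ${\rm Bi}(n)=\conv\{P_\sigma\}$, it contains all of ${\rm Bi}(n)$ and hence equals it by (ii), so the inclusion is not strict. The only genuinely non-routine step is the construction of $D^{(n)}$ together with the short orthogonality argument showing its $3\times 3$ block is not unistochastic; everything else is bookkeeping with the inclusions and Birkhoff's theorem. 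It is worth noting that this requires reading ${\rm Noisy}(n,n)$ as the set of maps for which a \emph{single} unitary $U$ works on all classical inputs, which is what legitimizes the restriction-to-diagonal-states argument in (ii).
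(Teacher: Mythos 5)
Your proof is correct and follows essentially the same route as the paper's: the same two inclusions (unistochastic maps realized via the decoherence unitary $W$, bistochasticity from trace preservation and unitality), a witness in ${\rm Noisy}(n,n)\setminus{\rm Uni}(n)$ built as a convex combination of two permutations with weights $\tfrac{n-1}{n},\tfrac1n$, implemented by a bath-controlled permutation and excluded from ${\rm Uni}(n)$ by orthogonality of the first two rows exploiting the zero pattern, and the same Birkhoff-based argument for the convexity equivalence. The only difference is cosmetic: the paper's witness is $D=\left(1-\tfrac1n\right)\mathbf{1}+\tfrac1n\pi$ with $\pi$ the full cyclic shift, whose zero pattern lets the row-orthogonality argument run directly on the full $n\times n$ matrix, avoiding your extra block-structure reduction to a $3\times 3$ unitary.
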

\textbf{Remark.} We do not currently know whether ${\rm Noisy}(n,n)$ is convex for $n\geq 3$, which would imply that ${\rm Noisy}(n,n)={\rm Bi}(n)$.
\begin{proof}
If $D\in{\rm Uni}(n)$ then there is a unitary matrix $V$ such that $D_{ij}=|V_{ij}|^2$. Thus, for any state $p\in\R^n$, we have
\[
   \langle i|V\hat p V^\dagger |i\rangle=\sum_j V_{ij} p_j V_{ij}^*=(Dp)_i;
\]
in other words, $V\hat p V^\dagger$ has the vector $Dp$ on its diagonal. We can compose $V$ with the unitary $W$ from the proof of Lemma~\ref{LemNoisyN} which implements the decoherence map to remove all off-diagonal elements with an $n$-dimensional bath. Then $U=W(V\otimes\mathbf{1}_B)$ will realize the map $D$ as a noisy operation. This proves that ${\rm Uni}(n)\subseteq {\rm Noisy}(n,n)$. On the other hand, for any given operation in ${\rm Noisy}(n,n)$ with associated unitary $U$, define
\[
   D_{ik}:=\frac 1 n \langle i|\Tr_B\left[ U\left(|k\rangle\langle k|\otimes\mathbf{1}\right)U^\dagger\right]|i\rangle,
\]
then it is straightforward to check that $\sum_i D_{ik}=\sum_k D_{ik}=1$ and $D_{ik}\geq 0$. Moreover, if $\hat p'$ is the output of the noisy operation on input $\hat p$, then $p'=D p$. This shows that ${\rm Noisy}(n,n)\subseteq {\rm Bi}(n)$. For $n=2$, we have ${\rm Uni}(2)={\rm Bi}(2)$, which proves the first statement of the lemma.

For $n\geq 3$, we construct a matrix $D\in{\rm Bi}(n)\setminus {\rm Uni}(n)$ which can be implemented by a noisy operation. Let $\pi$ be the cyclic permutation on $\C^n$ from the proof of Lemma~\ref{LemNoisyN}, and define
\[
   D:=\left(1-\frac 1 n\right)\mathbf{1}+\frac 1 n \pi.
\]
Obviously $D$ is bistochastic. To see that it is not uni\-stochastic, suppose $U$ is a unitary matrix with $D_{ij}=|U_{ij}|^2$, then orthogonality of the first two rows of $U$ is $\sum_{j=1}^n U_{1j}^* U_{2j}=0$. But for every $j\neq 1$, either $D_{1j}=0$ or $D_{2j}=0$, hence this sum equals $U_{11}^* U_{21}=e^{i\theta}\sqrt{D_{11}D_{21}}=e^{i\theta}\sqrt{\left(1-\frac 1 n\right)\frac 1 n}$ for some $\theta\in\R$, and this is not zero. Define the unitary
\[
   U=\mathbf{1}\otimes \sum_{k=2}^n |k\rangle\langle k|+\pi\otimes |1\rangle\langle 1|.
\]
A simple calculation shows that this unitary generates a noisy operation which maps $\hat p$ to $(Dp)^\wedge$. This shows that $D\in{\rm Noisy}(n,n)$, but $D\not\in{\rm Uni}(n)$.

Finally, since ${\rm Bi}(n)$ is convex, non-convexity of ${\rm Noisy}(n,n)$ would imply that ${\rm Noisy}(n,n)\subsetneq {\rm Bi}(n)$. Conversely, suppose that this strict inclusion was true (note that we do not know whether this is the case). Since every permutation is unistochastic, and the bistochastic maps are the convex hull of the permutations due to Birkhoff's Theorem~\cite{Bhatia}, ${\rm Noisy}(n,n)$ could then not be convex, because otherwise the permutations that it contains would generate all bistochastic maps.
\end{proof}
We conjecture that ${\rm Noisy}(n,n)$ is not convex (and thus a strict subset of the bistochastic maps), but we were not able to show this. This contrasts the simple behavior of the set of transitions that can be realized with an $n$-dimensional bath, characterized by majorization.

The following lemma has been used in the proof of Theorem~\ref{TheQC}.
\begin{lemma}
\label{LemConvDirectSum}
Let $B_1,\ldots,B_n$ be matrices such that every $B_i$ is a convex combination of other matrices $B_i^{(j)}$. Then
\[
   B=B_1\oplus\ldots\oplus B_n
\]
is a convex combination of the matrices
\[
   B_1^{(j_1)}\oplus\ldots\oplus B_n^{(j_n)}.
\]
\end{lemma}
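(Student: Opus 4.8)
The plan is to realize $B$ as a convex combination with product weights. Write each $B_i = \sum_{j} \mu_i^{(j)} B_i^{(j)}$ with $\mu_i^{(j)} \geq 0$ and $\sum_j \mu_i^{(j)} = 1$ (finitely many terms in each sum). For a tuple $(j_1,\ldots,j_n)$ define the weight $\mu_{j_1,\ldots,j_n} := \prod_{i=1}^n \mu_i^{(j_i)}$. These weights are manifestly nonnegative, and they sum to $1$ because $\sum_{j_1,\ldots,j_n} \prod_i \mu_i^{(j_i)} = \prod_i \big(\sum_{j_i} \mu_i^{(j_i)}\big) = 1$, so they form a legitimate convex-combination coefficient family indexed by the tuples.

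The key step is then to verify the identity
\[
   \sum_{j_1,\ldots,j_n} \mu_{j_1,\ldots,j_n}\, \big(B_1^{(j_1)}\oplus\cdots\oplus B_n^{(j_n)}\big) = B_1\oplus\cdots\oplus B_n.
\]
Since the direct sum places the matrices in disjoint diagonal blocks, and scalar multiplication and addition of block-diagonal matrices act blockwise, the $i$-th block of the left-hand side equals
\[
   \sum_{j_1,\ldots,j_n} \Big(\prod_{k=1}^n \mu_k^{(j_k)}\Big) B_i^{(j_i)}
   = \sum_{j_i} \mu_i^{(j_i)} B_i^{(j_i)} \prod_{k\neq i}\Big(\sum_{j_k}\mu_k^{(j_k)}\Big)
   = \sum_{j_i}\mu_i^{(j_i)} B_i^{(j_i)} = B_i,
\]
where the factors with $k\neq i$ collapse to $1$. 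Hence the left-hand side has exactly the same blocks as $B$, proving the identity.

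This already gives the claim: $B$ is the convex combination, with coefficients $\mu_{j_1,\ldots,j_n}$, of the matrices $B_1^{(j_1)}\oplus\cdots\oplus B_n^{(j_n)}$. There is no real obstacle here; the only thing to be careful about is the bookkeeping of the multi-index sums and the elementary observation that the direct-sum operation is (multi)linear in each summand, which is what lets the blockwise computation go through.
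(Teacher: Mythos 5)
Your proof is correct and uses the same underlying idea as the paper: product weights $\prod_i \mu_i^{(j_i)}$ realize the direct sum as a convex combination of direct sums of the $B_i^{(j_i)}$. The only difference is presentational — you carry out the multi-index bookkeeping directly for general $n$, while the paper proves the $n=2$ case with the same product coefficients and obtains general $n$ by iteration.
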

\begin{proof}
The case $n=1$ is trivial, and the cases $n\geq 3$ follow from repeated application of the statement for $n=2$, so it is sufficient to prove the lemma for $n=2$. Suppose that
\[
   B_1=\sum_i \lambda_i B_1^{(i)},\quad B_2=\sum_j \mu_j B_2^{(j)}, \quad \lambda_i,\mu_j\geq 0, \quad \sum_i\lambda_i=\sum_j\mu_j=1.
\]
Then
\[
   B_1\oplus B_2 = \left(\sum_{ij} \lambda_i \mu_j B_1^{(i)}\right)\oplus \left(\sum_{ij}\lambda_i \mu_j B_2^{(j)}\right)=\sum_{ij} \lambda_i \mu_j \left(B_1^{(i)}\oplus B_2^{(j)}\right).
\]
Note that this statement can easily be illustrated by writing down the block matrices concretely.
\end{proof}

The following lemma concerns the relative interior of a convex set~\cite{Webster}. It has been used in the proof of Corollary~\ref{CorInterior}.
\begin{lemma}
\label{LemConvexGeometry}
Let $C\subset\R^n$ be convex, and $X\subset C$ a dense subset of $C$. Then ${\rm relint}\, C \subseteq {\rm conv}\,X$.
\end{lemma}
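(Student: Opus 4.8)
\emph{Approach.} The plan is to surround any relative-interior point of $C$ by an arbitrarily small full-dimensional simplex contained in $C$, and then to use density of $X$ to replace the vertices of that simplex by points of $X$ without pushing the point out of the interior. If $C=\emptyset$ the statement is vacuous, so assume $C\neq\emptyset$. First I would pin down the affine hull: from $X\subseteq C$ we get ${\rm aff}\,X\subseteq{\rm aff}\,C$, while density gives $C\subseteq\overline{X}$, so (affine subspaces of $\R^n$ being closed) ${\rm aff}\,C\subseteq{\rm aff}\,\overline{X}\subseteq\overline{{\rm aff}\,X}={\rm aff}\,X$; hence $V:={\rm aff}\,C={\rm aff}\,X$. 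Since convex hull, relative interior, and density are preserved under affine isomorphisms, we may work inside $V$, which is affinely isomorphic to $\R^d$ with $d=\dim V$. There $C$ has nonempty interior, ${\rm relint}\,C={\rm int}\,C$, and $X$ remains dense in $C$, so it suffices to prove ${\rm int}\,C\subseteq{\rm conv}\,X$.

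Then I would fix $x\in{\rm int}\,C$ and pick $\varepsilon>0$ with $B(x,\varepsilon)\subseteq C$. Choose affinely independent $u_0,\ldots,u_d\in\R^d$ with $\sum_{i=0}^{d}u_i=0$ (for instance the vertices of a regular simplex centred at the origin), and set $v_i:=x+\delta u_i$ with $\delta>0$ small enough that $v_i\in B(x,\varepsilon)\subseteq C$ for all $i$. Then $v_0,\ldots,v_d$ are still affinely independent and $x=\tfrac{1}{d+1}\sum_i v_i$, so $x$ is the barycentre of the nondegenerate simplex $\Delta:={\rm conv}\{v_0,\ldots,v_d\}$ and therefore lies in ${\rm int}\,\Delta$ with all barycentric coordinates equal to $1/(d+1)$.

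Finally, I would use density: pick $w_i\in X$ with $\|w_i-v_i\|$ as small as desired. The barycentric coordinates of $x$ with respect to an affinely independent $(d+1)$-tuple depend continuously on that tuple (equivalently, the affine map sending a fixed reference simplex onto $\Delta$ is invertible, and its inverse depends continuously on the vertices), and at $(v_0,\ldots,v_d)$ they all equal $1/(d+1)>0$. Hence, for $w_i$ sufficiently close to $v_i$, the $w_i$ stay affinely independent and the barycentric coordinates of $x$ with respect to them stay strictly positive, so $x\in{\rm int}\,{\rm conv}\{w_0,\ldots,w_d\}\subseteq{\rm conv}\,X$, which proves ${\rm relint}\,C\subseteq{\rm conv}\,X$. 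The only step that needs real care is this last one — the quantitative claim that a sufficiently small perturbation of the vertices of $\Delta$ keeps $x$ in the interior — which can be made rigorous either through the continuity just invoked or by noting that $\{(y_0,\ldots,y_d,y)\in(\R^d)^{d+2}:y\in{\rm int}\,{\rm conv}\{y_0,\ldots,y_d\}\}$ is an open set.
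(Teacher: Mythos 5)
Your proof is correct and follows essentially the same route as the paper's: place a small regular simplex centred at the given relative-interior point with vertices inside $C$ (working within ${\rm aff}\,C$), then use density of $X$ to perturb the vertices into $X$ while keeping the point inside their convex hull. The only difference is that you make rigorous, via continuity of barycentric coordinates, the final perturbation step that the paper dismisses as ``geometrically clear''.
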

\begin{proof}
Denote the $\varepsilon$-ball around some point $c\in\R^n$ by $B_{\varepsilon}(c)=\{x\in\R^n\,\,|\,\,\|x-c\|\leq\varepsilon\}$.
Let $c\in{\rm relint}\, C$, then there is $\delta>0$ small enough such that $B_\delta(c)\cap {\rm aff}(C)\subset C$. Since $B_{\delta}(c)\cap {\rm aff}(C)$ is an $m$-dimensional Euclidean unit ball, with $m=\dim C$, there is a regular $m$-simplex with vertices $v_1,\ldots,v_{m+1}$ on the surface of that ball, i.e.\ $\|v_i-c\|=\delta$, and center $c=\frac 1 {m+1}\sum_{i=1}^{m+1} v_i$. Since $X$ is dense in $C$, we can find $v'_1,\ldots,v'_{m+1}\in X$ that are arbitrarily close to the vertices $v_1,\ldots,v_{m+1}$. If they are close enough, it is geometrically clear that their convex hull must contain $c$.
\end{proof}
\begin{figure}[!hbt]
\begin{center}
\includegraphics[angle=0, width=.5\textwidth]{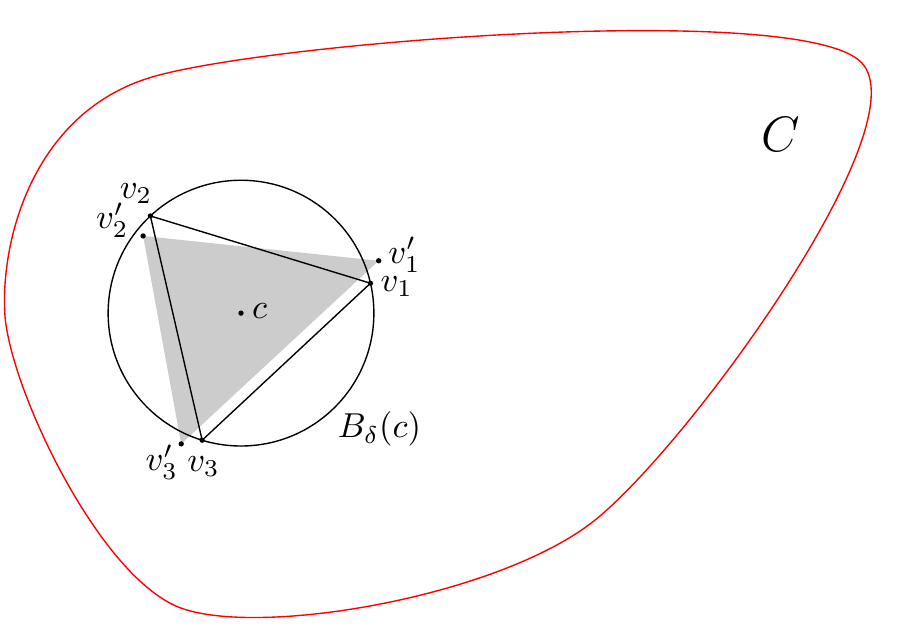}
\caption{Illustration of the proof of Lemma~\ref{LemConvexGeometry}.}
\label{fig_lemma16}
\end{center}
\end{figure}

As mentioned in the main text, Lemma~\ref{LemMarginal} has been proven independently in~\cite{LiPoonWang}, and it also follows from the results in~\cite{Vergne}. Nevertheless, we give a self-contained proof for completeness.

\newtheorem*{LemMarginal}{Lemma \ref{LemMarginal}}
\begin{LemMarginal}
Let $\rho$ be a quantum state on $AB$, and $\sigma$ a quantum state on $A$. If $\dim A\leq \dim B$, there exists a unitary $U$ on $AB$ with
\[
 \Tr_B\left(U\rho U^\dagger\right)=\sigma
\]
if and only if
\begin{equation}
   \Tr_B \hat\lambda(\rho)\succ\sigma,
\end{equation}
where $\lambda(\rho)$ is the vector of eigenvalues of $\rho$, ordered in non-increasing order, and $\hat\lambda(\rho)$ is the corresponding diagonal density matrix.
\end{LemMarginal}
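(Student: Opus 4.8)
The plan is to reduce everything to a \emph{diagonal} problem and then, for the nontrivial (sufficiency) direction, to build $U$ out of elementary two-level rotations that implement a chain of $T$-transforms in the $A$-marginal. First the reductions. Conjugating $\rho$ by a unitary on $AB$ can be absorbed into $U$ and leaves the spectrum fixed, so we may assume $\rho=\hat\lambda(\rho)$ is diagonal with non-increasingly ordered diagonal $\lambda_1\geq\cdots\geq\lambda_{nm}$, where $n=\dim A$, $m=\dim B$. Similarly, conjugating $\sigma$ by a unitary on $A$ can be absorbed into $U$, so we may assume $\sigma=\hat s$ is diagonal with ordered diagonal $s_1\geq\cdots\geq s_n$. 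With $\mu_i:=\sum_{j=1}^m\lambda_{(i-1)m+j}$ we have $\Tr_B\hat\lambda(\rho)=\hat\mu$, so the claim reduces to: there is a unitary $U$ on $AB$ with $\Tr_B(U\hat\lambda(\rho)U^\dagger)=\hat s$ if and only if $\mu\succ s$.

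\emph{Necessity} (the direction already in~\cite{Daftuar}): write $\tilde\rho:=U\hat\lambda(\rho)U^\dagger$, so $\hat s=\Tr_B\tilde\rho$. For each $k$, let $P_k$ project onto the first $k$ coordinates of $A$; then $P_k\otimes\mathbf{1}_B$ has rank $km$ and the Ky Fan maximum principle gives $\sum_{i=1}^k s_i=\Tr[(P_k\otimes\mathbf{1}_B)\tilde\rho]\leq\sum_{j=1}^{km}\lambda_j=\sum_{i=1}^k\mu_i$, with equality at $k=n$. Hence $s\prec\mu$, i.e.\ $\mu\succ s$.

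\emph{Sufficiency.} Assume $\mu\succ s$. By a standard fact about majorization~\cite{MarshallOlkin} there is a finite chain $\mu=v^{(0)}\to v^{(1)}\to\cdots\to v^{(N)}=s$ in which each step is a $T$-transform: $v^{(t+1)}$ equals $v^{(t)}$ except on a pair $(a,b)$ with $v^{(t)}_a\geq v^{(t)}_b$, where a mass $0\leq\delta_t\leq\tfrac12(v^{(t)}_a-v^{(t)}_b)$ is moved from coordinate $a$ to coordinate $b$. Since $n\leq m$, fix \emph{distinct} labels $c_1,\dots,c_n\in\mathbb{Z}_m$. The $T$-transform on $(a,b)$ will be realized by a commuting product of two-dimensional rotations $R_j$ ($j\in\mathbb{Z}_m$), where $R_j$ acts in the plane $\mathrm{span}\{\ket{a,j}_{AB},\ket{b,j+c_b-c_a}_{AB}\}$; note that $c_b\neq c_a$, so the two basis vectors of each plane carry distinct $B$-labels. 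Starting from $\hat\lambda(\rho)$ and applying these rotations step by step, one checks inductively the invariant that every coherence $\ketbra{u,j_1}{v,j_2}$ occurring in the global state satisfies $j_1-j_2\equiv c_u-c_v\pmod m$: a rotation on the ket side replaces $\ket{a,j}$ by $\cos\theta\,\ket{a,j}+\sin\theta\,\ket{b,j+c_b-c_a}$ (and symmetrically on the bra side, and with $a,b$ exchanged), and a one-line substitution shows the ``defect'' $c_u-c_v$ is preserved. Since $c_u\neq c_v$ for $u\neq v$, every inter-block coherence has $j_1\neq j_2$ and is therefore killed by $\Tr_B$; hence the $A$-marginal stays diagonal at every stage. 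It remains to pick the angles in step $t$ so that this diagonal becomes $\hat v^{(t+1)}$. The current $A$-marginal is $\hat v^{(t)}$, and the invariant with $u=v$ forces $j_1=j_2$, so the global state has no within-block $B$-coherences; therefore the full swap $\theta=\pi/2$ (all $R_j$) exchanges the masses of blocks $a$ and $b$, while $\theta=0$ leaves them. By continuity the mass of block $a$ takes every value in $[v^{(t)}_b,v^{(t)}_a]\ni v^{(t)}_a-\delta_t$, and the intermediate value theorem selects angles yielding exactly $v^{(t+1)}$ (block $b$ receives the complementary mass, all other blocks are untouched). Composing the rotations over $t=0,\dots,N-1$ gives the desired unitary $U$ on $AB$ with $\Tr_B(U\hat\lambda(\rho)U^\dagger)=\hat v^{(N)}=\hat s$.

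\emph{Main obstacle.} The delicate point is precisely the bookkeeping in the sufficiency step: a naive choice of two-level rotations would re-mix $B$-labels when $T$-transforms are chained, and could turn a harmless off-$A$-diagonal coherence into one that survives the partial trace, destroying diagonality of the $A$-marginal. The device that rules this out is the \emph{fixed} family of shifts $c_1,\dots,c_n$ and the conserved defect $c_u-c_v$, which pins every inter-block coherence off the $B$-diagonal regardless of the length of the chain. The existence of $n$ distinct shifts in $\mathbb{Z}_m$ is exactly the hypothesis $\dim A\leq\dim B$, so this is where that assumption is used essentially; the rest is routine.
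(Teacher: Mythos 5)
Your proof is correct, but it takes a route that differs from the paper's in both directions. For necessity you use the Ky Fan maximum principle on the projectors $P_k\otimes\mathbf{1}_B$ (essentially the Daftuar--Hayden argument), whereas the paper argues via the Ruch--Mead lemma, writing $U\hat\lambda(\rho)U^\dagger=(D\lambda(\rho))^\wedge+\Omega$ and then using monotonicity of the marginal map under majorization; your version is arguably more standard and equally rigorous. For sufficiency the paper invokes the Schur--Horn theorem to get, in one shot, a unitary $\bar U$ on $A$ whose unistochastic action maps $\Tr_B\lambda(\rho)$ to $\lambda(\sigma)$, and then lifts it by the explicit ansatz $U=\sum_{ij}u_{ij}\ketbra{i}{j}\otimes\pi^{j-i}$, killing the unwanted terms by the trace identity $\tr(\pi^s\Lambda_k)=0$ for $0<|s|<m$. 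You instead build the unitary from a chain of $T$-transforms, each realized by commuting two-level rotations tagged with fixed shifts $c_u\in\mathbb{Z}_m$, and control the coherences by the conserved ``defect'' $j_1-j_2\equiv c_u-c_v$ plus an intermediate-value argument; note that your composed unitary is in fact of the same algebraic form $\sum_{ij}u_{ij}\ketbra{i}{j}\otimes\pi^{c_j-c_i}$, so the central device --- attaching powers of a cyclic shift on $B$ so that all surviving $A$-coherences sit off the $B$-diagonal and vanish under $\Tr_B$, which is exactly where $\dim A\leq\dim B$ enters --- is shared. What your version buys is self-containedness: it avoids Schur--Horn altogether (effectively reproving the needed Horn-type statement via $T$-transforms) and makes the role of $n\leq m$ transparent, at the cost of longer bookkeeping; the paper's version is shorter given Schur--Horn and produces the unitary in closed form. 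Two harmless remarks: the restriction $\delta_t\leq\tfrac12(v^{(t)}_a-v^{(t)}_b)$ and the observation that there are no within-block $B$-coherences are not actually needed (at $\theta=\pi/2$ the cross terms vanish identically, and any $\delta_t\leq v^{(t)}_a-v^{(t)}_b$ is reachable by the same IVT argument), so they only tighten an argument that already works.
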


\begin{proof}
Suppose there is some unitary $U$ such that $\Tr_B(U\rho U^\dagger)=\sigma$.
Let us assume without loss of generality that $	\rho$ and $\sigma$ are diagonal and their diagonal elements are non-increasingly ordered, so $\hat\lambda(\rho) = \rho$ and $\hat\lambda(\sigma) = \sigma$.
With Lemma~\ref{LemUnistoch} we can write 
$
U\hat\lambda(\rho)\ U^\dagger = \left(D \lambda(\rho) \right)^\wedge+ \Omega
$
where $D$ is a stochastic matrix with $D_{ij}= |U_{ij}|^2$ and $\Omega$ some matrix with only zeros on the diagonal. Thus, it holds $\lambda (\rho) \succ D\lambda (\rho)$, and $\Tr_B\Omega$ has also only zeros on its diagonal.  Since $\sigma$ is diagonal, the off-diagonal elements of $\Tr_B\Omega$ must vanish identically, and we get $\Tr_B \Omega=0$ and $\lambda(\sigma)=\Tr_B\left( D\lambda(\rho) \right)^\wedge$.  If $p,q$ are probability vectors on a joint system $AB$ and $p$ is non-increasingly ordered, then from $p\succ q$, it follows that $\Sigma_B p \succ \Sigma_B q$ where $\Sigma_B$ is a the marginal map with respect to the $B$ system which satisfies $\Tr_B \hat\lambda = (\Sigma_B\lambda)^\wedge.$
So we have
\[
\Tr_B \hat\lambda(\rho) \succ \Tr_B \left( D\lambda(\rho)\right)^\wedge
 = \hat\lambda(\sigma)\]
which proves the ``only if'' direction of the statement (and is independent of the dimensions of $A$ and $B$).

To prove the converse direction, set $n= \dim A, m=\dim B$. Lemma~\ref{LemUnistoch} that we have just used is basically the one half of the Schur-Horn Theorem that states that the vector of diagonal elements of a Hermitian matrix is always majorized by the vector its eigenvalues.  For the second half of the proof one main ingredient will be the other half of the theorem which states: if $\lambda$ is the vector of eigenvalues of a Hermitian matrix then for each vector $p$ with $\lambda \succ p$ there exists a basis in which $p$ is the vector of diagonal elements of this Hermitian matrix.

So assume that $\Tr_B \hat\lambda(\rho)\succ\sigma$ which means that $\Sigma_B \lambda(\rho) \succ \lambda(\sigma)$. Without loss of generality, we may assume that $\sigma$ is diagonal (a local diagonalizing unitary can always be implemented via the global unitary $U$ that we are constructing now), $\sigma = \hat\lambda(\sigma)$. By the Schur-Horn theorem we know there exist a unitary $\bar U$ on the $A$ system such that
 \[
 \bar U \left( \Tr_B \hat\lambda(\rho) \right) \bar U^\dagger
 \]
 has diagonal elements $\lambda(\sigma)$. We will use this to define a new unitary on the whole $AB$ system such that $\Tr_B\left(U\rho U^\dagger\right)=\sigma$ is satisfied. So define
 \[
 U = \sum_{i,j=1}^n  u_{ij} \ketbra{i}{j}\otimes \pi^{j-i} ,
 \]
 where $u_{ij}$ are the components of $\bar U$, and $\pi^{j-i}$ is a (later to be specified) permutation matrix $\pi$ on the $B$ system to the power $(j-i)$.   Since $\pi^\dagger = \pi^{-1}$ and $\bar U$ itself is unitary, this is a unitary matrix as well.
Now the rest of the proof is merely showing by direct calculation that $\Tr_B\left( U\hat\lambda(\rho)U^\dagger \right) = \hat\lambda(\sigma)$.

For this we write the diagonal matrix $\hat\lambda (\rho)$ as
\[
\hat\lambda (\rho) = \sum_{i=1}^n \sum_{\mu=1}^m \lambda_{i\mu} \ketbra{i}{i}\otimes\ketbra{\mu}{\mu} = \sum_{i=1}^n   \ketbra{i}{i} \otimes \Lambda_i\quad\text{with}\quad \Lambda_i = \sum_{\mu=1}^m \lambda_{i\mu} \ketbra{\mu}{\mu},
\]
and with Lemma~\ref{LemUnistoch} we write again
\[
U\hat\lambda(\rho)U^\dagger = \left( D \lambda(\rho) \right)^\wedge + \Omega
\] 
where $D$ is the unistochastic matrix $D = U\circ U^*$ and $\Omega$ some matrix with only zeroes on the diagonal.  If we define $\bar D = \bar U \circ \bar U^*$ it is simple to show by direct calculation that 
\[
\Sigma_B \bigl( D \lambda(\rho) \bigr)= \bar D \bigl( \Sigma_B \lambda(\rho) \bigr) = \lambda(\sigma) ,
\]
where the second equality sign holds by assumption. So we know that $\Tr_B \left( D \lambda(\rho) \right)^\wedge = \hat \lambda(\sigma)$ and it remains to show that $\Tr_B \Omega = 0$.

By direct calculation, starting with the form of $\hat\lambda(\rho)$ and $U$ as just given, we get the form 
\[
\Omega = \sum_{\substack{ij=1\\i\neq j}}^n \ketbra ij \otimes \sum_{k=1}^n u_{ik}\,u^*_{jk} \ \pi^{i-k}\,\Lambda_k\,\pi^{k-j} + \sum_{i=1}^n \ketbra ii \otimes \Omega^{(ii)},
\]
where $\Omega^{(ii)}= \bra i\Omega\ket{i}$ is a matrix on the $B$ system with all diagonal elements equal to zero. Since $\Tr_B \Omega^{(ii)}=0$ the second term vanishes when calculating $\Tr_B \Omega$. 
However we still have to look at the first term where we calculate 
\[
\tr \bigl( \pi^{i-k}\,\Lambda_k\,\pi^{k-j} \bigr)= \tr \bigl( \pi^{i-j}\,\Lambda_k\bigr)
\]
and set $s = i-j$, with $s\in\{-n+1,\ldots,-1,1,\ldots,n-1 \}$ since $i\neq j$.

Now it is time to specify the permutation $\pi$.  We choose $\pi$ such that $\pi(i) = i+1$ for $i<m$ and $\pi(m)=1$.  This is a permutation with no fixed point in the set $\{1,\ldots,m\}$ and thus $\tr(\pi) = 0$ since the permutation matrix has only zeros on the diagonal.  This is as well valid for all powers $\pi^s$ of the matrix as long as $s \neq m$  and $s\neq 0$.  This is true due to the restriction on $s$ above and since we assumed $\dim A \leq \dim B$ i.e. $n\leq m$.  Now, since $\pi^s$ has no fixed point in $\{1,\ldots,m\}$, it follows that
\[
\Tr \bigl( \pi^s\,\Lambda_k\bigr) = \sum_{\mu=1}^m \lambda_{i\mu} \langle \mu|\pi^s|\mu\rangle= 0,
\]
finally showing that $\Tr_B \Omega=0$.
\end{proof}

\end{document}